\def\BibTeX{{\rm B\kern-.05em{\sc i\kern-.025em b}\kern-.08em
T\kern-.1667em\lower.7ex\hbox{E}\kern-.125emX}}
\newcommand{\HEADER}[1]{\ALC@it\underline{\textsc{#1}}\begin{ALC@g}}
\newcommand{\ENDHEADER}{\end{ALC@g}}
\newcommand{\ie}{\textit{i.e.}}
\newtheorem{thm}{Theorem}[]
\newtheorem{lem}[thm]{Lemma}
\newtheorem{rem}{Remark}
\newtheorem{prop}[thm]{Proposition}
\begin{document}
\title{Interference-Constrained Scheduling of a Cognitive Multi-hop Underwater Acoustic Network}
\author{Chen Peng,~\IEEEmembership{Student Member,~IEEE} and Urbashi Mitra,~\IEEEmembership{Fellow,~IEEE}
\thanks{This work is funded in part by one or more of the following grants: NSF CCF-1817200, ARO W911NF1910269, DOE DE-SC0021417, Swedish Research Council 2018-04359, NSF CCF-2008927, NSF CCF-2200221, ONR 503400-78050, and ONR N00014-15-1-2550.}
\thanks{The authors are with the Department of Electrical and Computer Engineering, University of Southern California, Los Angeles, CA 90089 USA (e-mail: cpeng732@usc.edu; ubli@usc.edu).}}

\maketitle

\begin{abstract}
This paper investigates optimal scheduling for a cognitive multi-hop underwater acoustic network with a primary user interference constraint. The network consists of primary and secondary users, with multi-hop transmission adopted for both user types to provide reliable communications. Critical characteristics of underwater acoustic channels, including significant propagation delay, distance-and-frequency dependent attenuation, half-duplex modem, and inter-hop interference, are taken into account in the design and analysis. In particular, time-slot allocation is found to be more effective than frequency-slot allocation due to the underwater channel model. The goal of the network scheduling problem is to maximize the end-to-end throughput of the overall system while limiting the throughput loss of primary users. Both centralized and decentralized approaches are considered. Partially Observable Markov Decision Processes (POMDP) framework is applied to formulate the optimization problem, and an optimal dynamic programming algorithm is derived. However, the optimal dynamic programming solution is computationally intractable. Key properties are shown for the objective function, enabling the design of approximate schemes with significant complexity reduction. Numerical results show that the proposed schemes significantly increase system throughput while maintaining the primary throughput loss constraint. Under certain traffic conditions, the throughput gain over frequency-slot allocation schemes can be as high as $50\%$.
\end{abstract}

\begin{IEEEkeywords}
Cognitive underwater acoustic network, multi-hop transmission, POMDP, dynamic programming, approximation scheme
\end{IEEEkeywords}

\section{Introduction}
\IEEEPARstart{U}{nderwater} communications and networks are currently an active research area due to a wide range of applications, including environment monitoring, oceanography data collection, underwater target tracking, \textit{etc.} \cite{luo2014challenges}. As electromagnetic and optical signals have limited transmission ranges, acoustic signals are employed for underwater wireless communication \cite{partan2007survey}. In contrast to terrestrial communications, underwater acoustic communications (UAC) possesses distinguishing characteristics, such as significant propagation delay, frequency-dependent attenuation, and limited spectrum bandwidth \cite{stojanovic2009underwater}.

Various underwater acoustic systems, including natural and artificial, use acoustic signals for communication, sensing, detection, and localization. Due to frequency-dependent attenuation, most natural and artificial systems utilize a narrow bandwidth between $1$ kHz and $100$ kHz \cite{luo2014challenges}, leading to a potentially overcrowded spectrum. Nevertheless, the spectrum can still be underutilized both temporally and spatially. Cognitive radio approaches have been proposed in underwater acoustic networks (UANs) to achieve efficient spectrum utilization \cite{luo2014challenges,mishachandar2021underwater,baldo2008cognitive,pottier2017robust,demirors2015software,geng2022exploiting}. Spectrum sensing, dynamic power control, and spectrum management are typical components of cognitive acoustic systems. Additionally, many UAC systems utilize multi-hop transmission to provide broader area coverage with high efficiency, where a long distance is divided into multiple hops \cite{stojanovic2007relationship,carbonelli2009error,zhang2010analysis}. As the ongoing and future applications of UAN are likely to require larger area coverage and the coexistence of heterogeneous systems, there is a need to develop cognitive scheduling schemes for multi-hop UANs.

Key aspects of our proposed framework have been previously considered. Spectrum allocation and utilization for cognitive UANs has been investigated for the single hop case \cite{mishachandar2021underwater,baldo2008cognitive,pottier2017robust}. In \cite{baldo2008cognitive}, a frequency channel allocation scheme is proposed, which exploits user location knowledge to maximize the minimum channel capacity. In \cite{pottier2017robust}, the authors propose a decentralized spectrum-sharing method for orthogonal frequency-division multiplexing systems in interference channels, where the problem is formulated as a non-cooperative game. While spectrum allocation is commonly considered in radio-frequency-based cognitive radio systems, it may not be an efficient strategy for UAN due to the limited bandwidth and severe signal attenuation as a function of range and frequency. On the other hand, the large propagation delay provides a unique interference management opportunity in the time domain. Thus, herein, we explore underutilized resources in the time domain to develop our scheduling schemes. The numerical results verify that cognitive time-slot allocation can be more efficient than cognitive frequency-slot allocation in UANs.

Packet size is another critical parameter of UAC (see {\em e.g.} \cite{stojanovic2005optimization}). Optimization of packet size has been studied for UAC based on various performance metrics. In \cite{basagni2012optimized}, the impact of packet size on the performance of media access control (MAC) protocols for UAN is investigated. An algorithm is proposed in \cite{ayaz2012reliable} to determine the suitable data packet size for increasing the transmission reliability in UAN. For cognitive radio networks, a dynamic packet size optimization and channel selection scheme is proposed in~\cite{jamal2015dynamic}, where the framework of the constrained Markov decision process is employed for problem solution. However, these prior works do not actively exploit a key unique feature of UAC, the significant propagation delay. Herein, we show that these delays can be effectively leveraged to improve throughput in UANs.

Multi-hop cooperative transmission and network scheduling for UAC systems has attracted intense research interest \cite{chen2007delay,carbonelli2009error,zhang2010analysis,basagni2010choosing,cao2010capacity,lmai2015throughput,de2016optimizing}. In \cite{zhang2010analysis}, performance bounds and scheduling design for multi-hop underwater acoustic line networks were devised. The transmission schedule that maximizes network throughput for underwater acoustic multi-hop grid networks is investigated in \cite{lmai2015throughput}. Optimal parameter selection, including the number of hops, re-transmission, and code rate, is analyzed in \cite{de2016optimizing} to improve energy efficiency for multi-hop UAC. However, the scheduling problem is more complicated than direct parameter optimization and requires an alternative formulation in a cognitive multi-hop UAN.

Recently, two works have investigated the time domain UAN scheduling problem. In \cite{zeng2017distributed}, a TDMA-based scheduling scheme for UAN is developed to achieve free of interference communication. The interference alignment is achieved by greedily maximizing interference overlapping possibilities. Although interference-free communication is desirable, it also greatly restricts the potential of improving network performance, especially when there is high uncertainty. In \cite{geng2022exploiting}, a deep-reinforcement learning-based MAC is proposed to improve the throughput of UAN, where one agent node coexists with other nodes employing traditional protocols. These works focus on improving the performance of UAN where users have the same priority level. The coexistence of primary users (PUs) and secondary users (SUs) necessitates a new mechanism to ensure the performance of PUs. Specifically, we consider the decentralized interference management problem for users with priority differences in UANs, under the POMDP framework.

Herein, we introduce cognitive radio ideas into the scheduling problem for multi-hop UANs. This joint problem formulation offers challenges with an increased decision dimension and the need for interference management. However, the multi-hop feature offers additional temporal/spatial reuse opportunities. We consider a multi-hop network of PUs that coexists with a multi-hop network of SUs. The main contributions of this work are summarized as follows:
\begin{itemize}
\item We model the time-evolving behavior of cognitive multi-hop networks in the realistic underwater acoustic channel as a discrete-time Markov chain, which captures the distinguishing underwater channel characteristics of long propagation delay and high path attenuation.
\item We formulate the centralized scheduling problem with a primary user interference constraint as a constrained Partially Observable Markov Decision Process (POMDP) with the existence of a central controller. A dynamic programming (DP) algorithm is derived for solving the optimization problem.
\item We prove key properties of the reward-to-go functions, which enable the derivation of an efficient approximation scheme for the centralized system. The performance of the centralized scheme provides a benchmark for the decentralized scheme to be developed.
\item We formulate the decentralized scheduling problem with a primary user interference constraint as a constrained, decentralized Partially Observable Markov Decision Process (Dec-POMDP). An efficient approximation algorithm is derived for solving the problem.
\item We investigate the impact of SU packet sizes on interference levels. By utilizing the unique relationship between interference and packet sizes, an optimization problem of SU packet sizes is formulated and solved.
\item We evaluate the performance of the proposed approximation schemes through simulation results and illustrate that they are adaptive to the unique characteristics of underwater acoustic channels.
\end{itemize}

The rest of this paper is organized as follows. In Section \ref{SM}, the system model is described. The centralized scheduling problem with performance constraint is formulated in Section \ref{OCSP} and solved by a proposed approximation scheme in Section \ref{CCTS}. In Section \ref{ODSP}, we formulate the decentralized scheduling problem and propose an approximation scheme in Section \ref{DCTS}. Performance of both centralized and decentralized schemes are evaluated via numerical results in Section \ref{NR}. Finally, we conclude the paper in Section \ref{CC}.

\section{System Model} \label{SM}
In this section, we describe the underwater channel model, state space representation of the system and transition probabilities for the cognitive multi-hop network.
\subsection{Underwater Acoustic Channel Model}
We adopt the underwater acoustic channel model introduced in \cite{qarabaqi2013statistical}, which incorporates both small-scale variations (due to scattering) and large-scale channel variations (due to multipath and location uncertainty). The attenuation $A(d,f)$ experienced by an underwater acoustic signal transmitted
over a distance $d$ meters at carrier frequency $f$ (in \unit{kHz}), is given by (in \unit{dB}):
\begin{equation}
10\log_{10}(A(d,f)/A_0) = \kappa 10\log_{10}d + \frac{d}{10^3} 10\log_{10} a(f),
\label{eq:attenuation}
\end{equation}
where $A_0$ is a normalizing constant, $\kappa$ denotes the spreading factor and $a(f)$ is the absorption coefficient \cite{stojanovic2007relationship}. The absorption coefficient can be expressed in \unit{dB/km} using Thorp's empirical formulas for $f$ in kilohertz as \cite[Chapter~1]{brekhovskikh2003fundamentals}
\begin{equation}
10\log_{10}a(f) = 0.11\frac{f^2}{1+f^2} + 44\frac{f^2}{4100+f^2} + 2.75 \times 10^{-4} f^2 + 0.003.
\end{equation}

Considering multiple paths of nominal length $\bar{d}_l, l = 0,\dots,L-1$ with corresponding transfer functions modeled as
\begin{equation}
\bar{H}_l(f) = \frac{\Gamma_l}{\sqrt{A(\bar{d}_l, f)}} = \frac{\Gamma_l}{\sqrt{\left(\bar{d}_l/\bar{d}_0\right)^\kappa a(f)^{\bar{d}_l-\bar{d}_0}}} \bar{H}_0(f),
\end{equation}
where $\Gamma_l$ is the cumulative reflection coefficient and the relative nominal path gain is given by $\bar{h}_l = \Gamma_l/\sqrt{(\bar{d}_l/\bar{d}_0)^\kappa a(f)^{\bar{d}_l-\bar{d}_0}}$. Large-scale displacements cause the path length to deviate from the nominal as $d_l = \bar{d}_l + \Delta d_l$, where $\Delta d_l$ is considered to be random. Based on the approximation introduced in \cite{qarabaqi2013statistical}, the relative path gain can be expressed as
\begin{equation}
h_l \approx \bar{h}_l e^{-\xi_l \Delta d_l/2},
\end{equation}
where $\xi_l = a_0 - 1 + \kappa/\bar{d}_l$ and $a_0$ is the absorption factor corresponding to a frequency within the signal bandwidth. Further, considering the small-scale effect of scattering, a path is split into a number of micro-paths. The overall channel transfer function is expressed as
\begin{equation}
H(f) = \bar{H}_0(f) \sum_l \sum_i h_{l,i} e^{-j2\pi f \tau_{l,i}} = \bar{H}_0(f) \sum_l h_l \gamma_l(f) e^{-j2\pi f \tau_p},
\end{equation}
where $\tau_l$ are the path delays and $\tau_{l,i} = \tau_l + \delta\tau_{l,i}$ are the micro-path delays. The small-scale fading coefficient is defined as
\begin{equation}
\gamma_l(f) \triangleq \frac{1}{h_l}\sum_i h_{l,i} e^{-j2\pi f \delta\tau_{l,i}},
\end{equation}
where the delays $\delta\tau_{l,i}$ are considered as random variables. 
We define the channel gain, $G,$ for a system operating in frequency range $[f_c - B/2, f_c + B/2]$ as
\begin{equation}
G = \frac{1}{B} \int_{f_c - B/2}^{f_c + B/2} |H(f)|^2 df,
\end{equation}
which can be modeled as a log-normal random variable (see \cite{qarabaqi2013statistical} and references therein). The mean and variance of the approximate log-normal channel gain can be calculated theoretically \cite{cardieri2000statistics} or estimated from experimental data.

The ambient noise in the ocean consists of four sources: turbulence, shipping, waves and thermal noise, which is modeled as Gaussian \cite{stojanovic2007relationship}. The power spectral densities of the four noise components in \unit{dB} re \unit{\micro\pascal/\hertz} (\ie, the power per unit bandwidth associated with the reference sound pressure level of 1$\mu$Pa) can be formulated as functions of frequency in kHz \cite{stojanovic2007relationship}
\begin{align}
10 \log_{10} N_\text{t}(f) &= 17 -30\log_{10} f, \nonumber \\
10 \log_{10} N_\text{s}(f) &= 40 + 20(s-0.5) + 26\log_{10} f - 60\log_{10}(f+0.03), \nonumber \\
10 \log_{10} N_\text{w}(f) &= 50 + 7.5\sqrt{w} + 20\log_{10} f - 40 \log_{10}(f+0.4), \nonumber \\
10 \log_{10} N_\text{th}(f) &= -15 + 20\log_{10} f,
\end{align}
where $s$ is the shipping activity factor and $w$ is the wind speed in \unit{m/s}. The overall power spectral density of noise is
\begin{equation}
N(f) = N_\text{t}(f) + N_\text{s}(f) + N_\text{w}(f) + N_\text{th}(f).
\end{equation}

\subsection{State Space Representation}
We consider a cognitive multi-hop (CM) UAN consisting of multi-hop chains of $N_P$ PU hops and $N_S$ SU hops respectively, as shown in Fig. \ref{fig:network}. In the multi-hop network consisting of both PUs and SUs, the transmitter at one end of the multi-hop chain sends packets to the receiver at the other end, with the help of relay nodes in between. In the system, PUs adopt a fixed, decode-and-forward protocol. Due to the half-duplex nature of acoustic modems, and to avoid the hidden terminal problem \cite{al2020survey}, each PU will stay idle for two time slots after the transmission of a packet. We define the system state as the current action performed by each PU and the current buffer state of each SU. The channel quality information is not included in the system state, but reflected in the state transition probabilities introduced in the sequel. The underlying state vector in time slot $t$ for the PUs is represented as
\begin{equation}
\bm{s}_t \triangleq (s_t^1, s_t^2, \dots, s_t^{N_P})^T \in \bm{S} \equiv \{0,1\}^{N_P},
\end{equation}
\noindent where $s_t^j$ is the state of the $j$-th PU hop, which can be idle ($s_t^j = 0$) or transmitting ($s_t^j = 1$). The buffer state vector of the SUs in time slot $t$ is represented as
\begin{equation}
\bm{b}_t \triangleq (b_t^1, b_t^2, \dots, b_t^{N_S})^T \in \bm{B} \equiv \{0,1\}^{N_S},
\end{equation}
where $b_t^i$ is the state of the $i$-th SU hop, which can be empty ($b_t^i = 0$) or nonempty ($b_t^i = 1$). 

\begin{figure}[!t]
\centering
\includegraphics[width=0.7\linewidth]{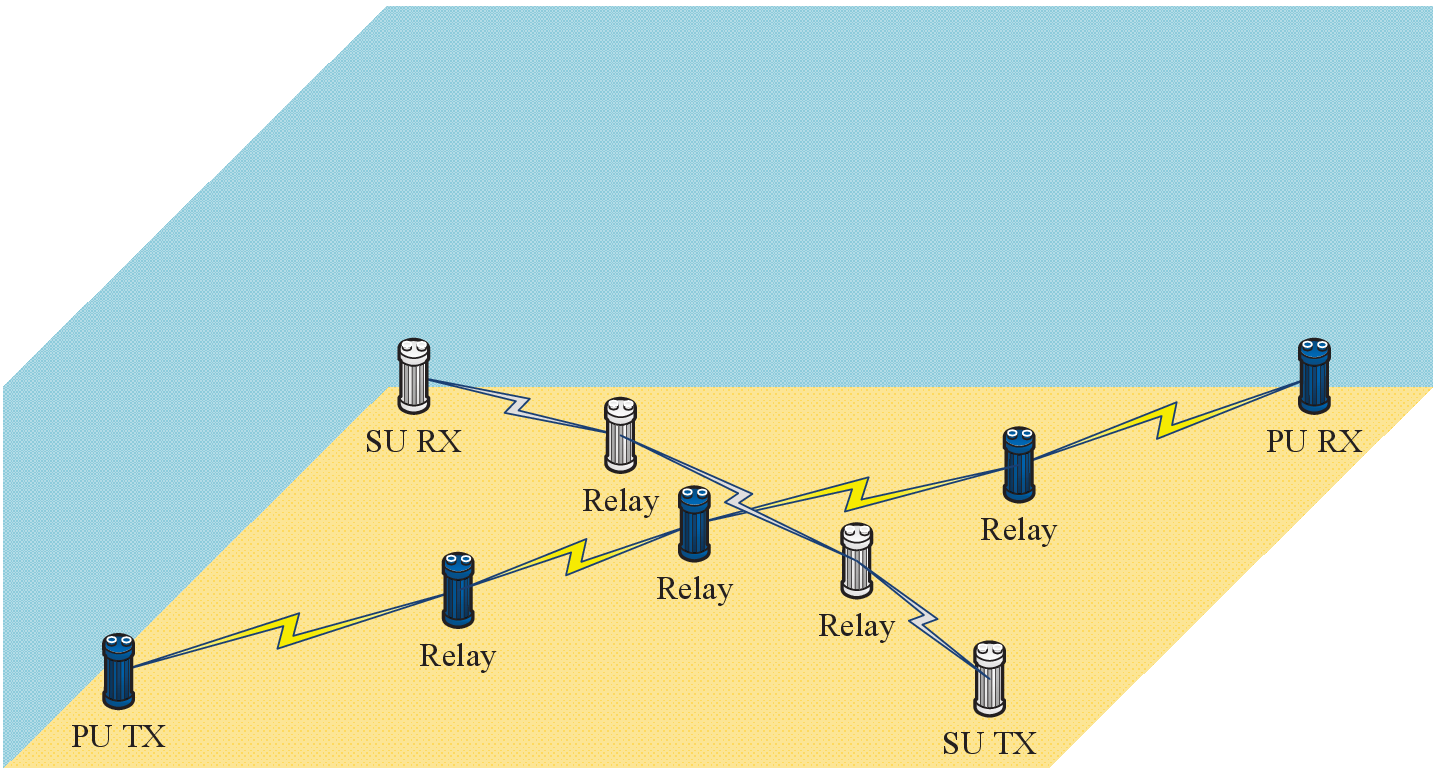}
\caption{A cognitive multi-hop underwater acoustic network of $N_P = 4$ PU hops and $N_S = 3$ SU hops.}
\label{fig:network}
\end{figure}

During in total $T$ time slots, the SUs opportunistically access the spectrum to transmit packets. The decision vector of SUs in each time slot is represented as
\begin{equation}
\bm{\delta}_t \triangleq (\delta_t^1, \delta_t^2, \dots, \delta_t^{N_S})^T \in \bm{\Delta} \equiv \{0,1\}^{N_S},
\end{equation}
where $\delta_t^i$ is the decision variable for the $i$-th SU, which can be either idle ($\delta_t^i = 0$) or transmitting ($\delta_t^i = 1$). The buffer state vector contains information about the existence of packets in the SU buffers while the decision vector indicates whether the SUs decide to transmit packets or not. Note that $\bm{\delta}_t$ is conditioned on $\bm{b}_t$, because a SU can only transmit when its buffer is not empty. Different from terrestrial cognitive radio networks, SUs can not conduct both sensing and transmission effectively in one slot, because of the large propagation delay characteristic of underwater acoustic channel.

\subsection{Transition Probability} \label{SM-TP}
The system state transition depends on two factors, new packet arrival and packet transmission loss. We model the packet traffic of PUs as a two-state Markov process, depicted in Fig. \ref{fig:arrival_MP}, that alternates between on-period (packet arriving) and off-period (idle) \cite{zhou2008discrete}. The traffic density is thus characterized by two parameters: $\alpha_1$, $\alpha_2 \in [0,1]$. Taking the bursty packet arrival nature into account, we assume $\alpha_1 < \alpha_2$, meaning that packets tend to arrive in succession. The SU transmitter is assumed to be backlogged, \textit{i.e.}, its buffer is always full.

\begin{figure}[!t]
\centering
\includegraphics[width=0.5\linewidth]{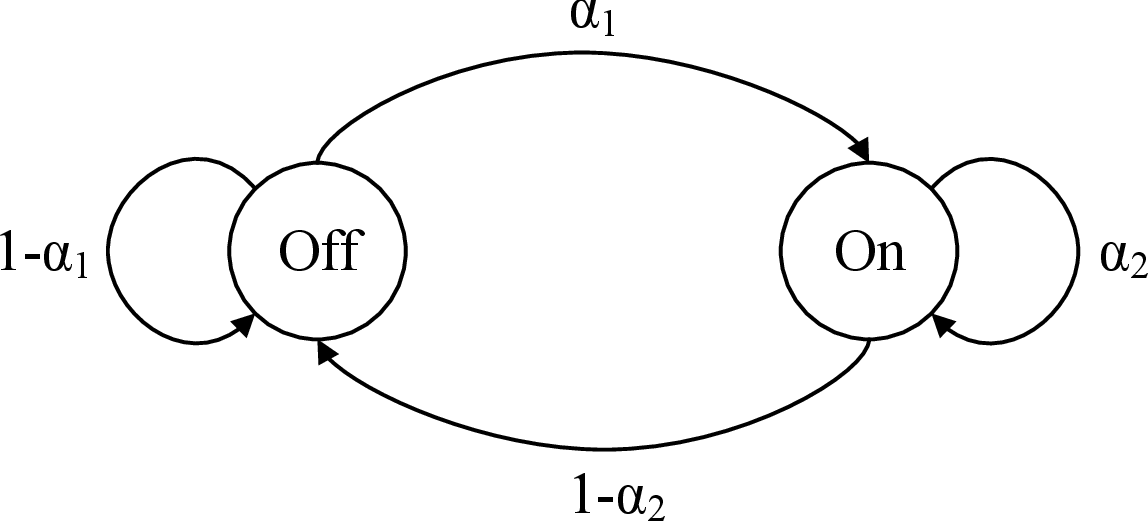}
\caption{The two-state Markov model for packet arrival.}
\label{fig:arrival_MP}
\end{figure}

Another factor that determines state transition probability is the packet loss rate. For each relay node of the PUs, if a packet is received without error, it transmits the packet to the next node with probability $1$. The packet loss probability is determined by per bit signal to interference and noise ratio (SINR), which is a function of random channel gain, interference strength, and noise. We formulate the per bit SINR as
\begin{equation}
\gamma_{P,k}^j(\bm{s}_t, \bm{\delta}_t) = \frac{G \cdot P_{TX}}{I_{P,k}^j(\bm{s}_t,\bm{\delta}_t) + \int N(f) df},
\end{equation}
where $P_{TX}$ is the transmit power and $I_{P,k}^j(\bm{s}_t, \bm{\delta}_t)$ is the interference power from other PUs and SUs, on the $k$-th bit. As aforementioned, the channel gain $G$ follows a log-normal distribution, which enables us to approximate the conditional SINR as log-normal distribution with mean $\mu_{\ln\gamma}$ and variance $\sigma^2_{\ln\gamma}$ \cite[Chapter~3]{stuber2001principles}. Assuming QPSK transmission, the average error rate of the $k$-th bit can be computed by numerical integration as
\begin{equation}
P_b^k(\bm{s}_t,\bm{\delta}_t) = \int_{0}^{\infty} \frac{1}{\gamma \sigma_{\ln\gamma}(k)\sqrt{2\pi}}\exp\left[-\frac{(\ln\gamma-\mu_{\ln\gamma}(k))^2}{2\sigma^2_{\ln\gamma}(k)}\right] \cdot Q(\sqrt{2\gamma}) d\gamma,
\label{eq:ber}
\end{equation}
where $Q(\cdot)$ is the standard normal tail distribution function. For other modulation schemes, the average bit error rate can be calculated similarly. Notice that $P_b^k$ is a function of state $\bm{s}_t$ and decision $\bm{\delta}_t$ because $\mu_{\ln\gamma}(k)$ and $\sigma^2_{\ln\gamma}(k)$ are functions of them. Given average bit error rate, we can approximate the packet loss probability of the $j$-th PU hop in time slot $t$ by
\begin{equation}
p^j_P(\bm{s}_t, \bm{\delta}_t) \approx 1 - \prod_{k=1}^{L_P}\left[1 - P_b^k(\bm{s}_t,\bm{\delta}_t)\right],
\end{equation}
where $L_P$ is the PU packet size in bits. The packet loss probability for SUs, denoted as $p_S^i$, can be computed in a similar way.

Based on the packet arrival model and the packet loss model, we model the temporal evolution of the PU state $\bm{s}_t$ as a Markov chain, with transition matrix $\bm{P}(\bm{\delta}_t)$, which is a function of SU decision vector $\bm{\delta}_t$ in each time slot.

Notice that \eqref{eq:ber} provides the bit error rate (BER) of the $k$-th bit; different bits in a packet may have very different BERs. These differences arise from a combination of UAN channel effects: fading, multipath and the long propagation delay. Due to propagation delay differences between the desired signal and the interference signal, only part of the desired signal is interfered. Therefore it is possible that the overlapped part is small enough (or no overlap occurs) that the packet can be received with limited error. This opportunity is captured in the transition probabilities and will be leveraged to improve throughput in UANs in this work.

\section{The Optimal Centralized Scheduling Problem} \label{OCSP}
In this section, we formulate the interference-constrained scheduling problem for a centralized SU network. Although centralized control may not be practical, the proposed centralized scheme provides a benchmark for the decentralized scheme to be developed in the sequel, which is a key focus of the current work. Determining the location of the central controller and scheduling control signals are design problems. Herein, we assume that there is a central controller (CC) such that the overhead of information exchange between SUs and the CC is one time slot. Also, we assume control signals for gathering observation and broadcasting decisions are scheduled in dedicated bands. Under these optimistic assumptions, we derive an upper bound on the performance of the centralized scheme.

\subsection{Observation Model and Sufficient Statistics}
Scheduling of the SUs is based on channel occupancy measurements collected by the SUs in the sensing mode. We assume the channel sensing is conducted in a centralized fashion, such that each SU collects noisy measurements independently and reports to the central controller. The observation model for the $i$-th SU in time slot $t$ is expressed as
\begin{equation}
y_t^i = \bm{c}_i^T \bm{s}_t + \bm{d}_i^T \bm{\delta}_t + n_t^i,
\label{eq:obs_model}
\end{equation}
where $n_t^i \sim \mathcal{N}(0,\sigma_N^2)$ is the i.i.d. Gaussian noise over time and across SUs. $\bm{c}_i$ and $\bm{d}_i$ are measurement vectors for PUs and other SUs that are within the one slot sensing range, respectively, containing channel attenuation information.

The information vector in time slot $t$, denoted as $\bm{I}_t$, can be expressed as
\begin{equation}
\bm{I}_t = (\bm{y}_1,\dots,\bm{y}_{t-1},\bm{\delta}_1, \dots,\bm{\delta}_{t-1}),
\end{equation}
which is of expanding dimension. To reduce the information that are indeed necessary for decision making, we use the conditional probability distribution of state $\bm{s}_t$, denoted by $\bm{\omega}_t$, which is a sufficient statistic for $\bm{I}_t$ \cite[Chapter~5]{bertsekas2012dynamic}. $\bm{\omega}_t$ is also known as belief state and is defined as
\begin{equation}
\bm{\omega}_t = [\dots, \omega_t(\bm{s}), \dots]^T, \text{ where } \omega_t(\bm{s}) \triangleq \mathbb{P}(\bm{s}_t = \bm{s}|\bm{I}_t), \bm{s} \in \bm{S}.
\end{equation} 

Given observation $\bm{y}_t$, the central controller can update its belief state via the Bayes' Rule as follows
\begin{equation}
\omega_t(\bm{s}) \triangleq \mathbb{P}(\bm{s}_t = \bm{s}|\bm{I}_t) = \frac{\left[\prod_{i=1}^{N_S}\mathbb{P}(y_t^i|\bm{s}_t=\bm{s})\right]\left[\bm{\omega}_{t-1}^T \bm{P}(\bm{\delta}_{t-1})_{\bm{s}}\right]}{\sum_{\bm{s}' \in \bm{S}}\left[\prod_{i=1}^{N_S}\mathbb{P}(y_t^i|\bm{s}_t=\bm{s}')\right]\left[\bm{\omega}_{t-1}^T \bm{P}(\bm{\delta}_{t-1})_{\bm{s}'}\right]},
\label{eq:belief_update_C}
\end{equation}
where $\bm{P}(\bm{\delta}_{t-1})_{\bm{s}}$ represents the column of the transition matrix $\bm{P}(\bm{\delta}_{t-1})$ corresponding to state $\bm{s}$ and $\mathbb{P}(y_t^i|\bm{s}_t=\bm{s})$ is the probability distribution function of a Gaussian distribution evaluated at $y_t^i$. Further, we denote the belief state update rule as $\Phi(\cdot)$, then $\bm{\omega}_t = \Phi(\bm{\omega}_{t-1},\bm{\delta}_{t-1},\bm{y}_t)$.

Notice that because of the large propagation delay, the observation takes one time slot, and the information exchange between SUs and the central controller takes another time slot. Therefore the central controller has to make a decision for time slot $t$ only based on the belief $\bm{\omega}_{t-2}$ calculated from the previous observation. Due to imperfect observations (noisy and outdated), the central controller has to make decisions based on the partially observed state information. Thus, the optimal scheduling of a CM-UAN is a POMDP problem \cite[Chapter~6]{kochenderfer2015decision}. The uncertainty existing in the time-evolving system consists of the following sources: random packet arrival, random transmission failure and imperfect observation with delay.

\subsection{Problem Formulation} 
Our goal is to maximize the expected end-to-end throughput of the overall system over a finite horizon of $T$ time slots. We denote the one slot throughput of PU and SU networks as $g_P(\cdot)$ and $g_S(\cdot)$, respectively. The total throughput in slot $t$ is defined as follows
\begin{align}
g(\bm{s}_t, \bm{\delta}_t) &= g_P(\bm{s}_t, \bm{\delta}_t) + g_S(\bm{s}_t, \bm{\delta}_t) \\
&\stackrel{(a)}{=} L_P \cdot s_t^{N_P} p_P^{N_P}(\bm{s}_t,\bm{\delta}_t) + L_S \cdot \delta_t^{N_S} p_S^{N_S}(\bm{s}_t,\bm{\delta}_t),
\end{align}
where $L_S$ is the SU packet size in bits. Equality (a) holds because the throughput is achieved only when the transmission to the end receiver is successful. Denote a centralized policy as $\Psi_c$, which is a mapping from the belief state to a decision,
\begin{equation}
\bm{\delta}_t = \Psi_c(\bm{\omega}_t).
\end{equation}
Furthermore, denote the expected accumulated throughput of both PUs and SUs over a finite horizon as $V$, which can be decomposed as $V_P + V_S$. The optimization problem can be formulated as follows
\begin{align}
&\text{Maximize:} & &V(\Psi_c) = \mathbb{E}\left\{\sum_{t=1}^T g(\bm{s}_t, \bm{\delta}_t) \bigg|\Psi_c \right\} \label{obj} \\
&\text{Subject to:} & &V_P(\Psi_c) \geq \beta\cdot V_P(\Psi_0),
\label{constr}
\end{align}
where $\beta \in (0,1)$ is the PU throughput degradation coefficient. $\Psi_0$ is the policy of always being silent and the definition of $V_P(\Psi_0)$ is given as
\begin{equation}
V_P(\Psi_0) \triangleq \mathbb{E}\left\{\sum_{t=1}^T g_P(\bm{s}_t, \bm{\delta}_t) \bigg|\Psi_0 \right\} = \mathbb{E}\left\{\sum_{t=1}^T g_P(\bm{s}_t, \bm{0}) \right\}.
\end{equation}
The constraint thus states that the expected throughput of PUs, given by the optimal decision sequence, should be no less than the $\beta$ weighted throughput of PUs in the case of no SU interference. In this way, the interference level of SUs can be controlled by selecting specific value of $\beta$.

The optimization problem is a constrained POMDP problem and we make use of the belief state $\bm{\omega}_t$ to rewrite the objective function as
\begin{equation}
V(\Psi_c) = \mathbb{E}\left\{\sum_{t=1}^T \bm{\omega}_t^T \cdot \bm{g}(\bm{\delta}_t) \bigg| \Psi_c \right\},
\end{equation}
where $\bm{g}(\bm{\delta}_t)$ is the vector of total throughput at each state, defined as
\begin{equation}
\bm{g}(\bm{\delta}_t) \triangleq L_P \cdot s_t^{N_P} \bm{p}_P^{N_P}(\bm{\delta}_t) + L_S \cdot \delta_t^{N_S} \bm{p}_S^{N_S}(\bm{\delta}_t),
\label{eq:th_C}
\end{equation}
where $\bm{p}_P^{N_P}(\bm{\delta}_t)$ and $\bm{p}_S^{N_S}(\bm{\delta}_t)$ are the vectors of packet success rates, consisting respectively of $p_P^{N_P}(\bm{s}_t,\bm{\delta}_t)$ and $p_S^{N_S}(\bm{s}_t,\bm{\delta}_t)$, $\forall \bm{s}_t \in \bm{S}$.

This constrained optimization problem can be solved by dynamic programming \cite[Chapter~2]{bertsekas2012dynamic}. Taking the constraint into account, we formulate the reward-to-go function $V_t$ as a function of both the belief state $\bm{\omega}_t$ and the achieved PU throughput $\eta_t$. For $t = T-1, \dots, 0$, the reward-to-go function $V_t(\bm{\omega}_t, \eta_t)$ is related to $V_{t+1}(\bm{\omega}_{t+1},\eta_{t+1})$ through the recursion
\begin{equation}
V_t(\bm{\omega}_t,\eta_t) = \max_{\bm{\delta}_t \in \bm{\Delta}} \biggl\{\bm{\omega}_t^T \cdot \bm{g}(\bm{\delta}_t) + \mathop{\mathbb{E}}_{\bm{y}_{t+1}}\left[V_{t+1}\left(\Phi(\bm{\omega}_t, \bm{\delta}_t, \bm{y}_{t+1}), \eta_t - \bm{\omega}_t^T \bm{g}_P(\bm{\delta}_t)\right)\right] \biggr\},
\end{equation}
\noindent where $\Phi(\cdot)$ is the belief state update function. The reward-to-go function for $t = T$ is given by
\begin{equation}
V_T(\bm{\omega}_T, \eta_T) = \max_{\bm{\delta}_T \in \bm{\Delta}}\left\{\bm{\omega}_T^T \cdot \bm{g}(\bm{\delta}_T) \right\}.
\end{equation}
However, solving the dynamic programming problem is intractable because of the uncountably infinite belief space, an exponentially large control space and the uncountably infinite observation space. Specifically, for $m$ possible system states, $d$ decision choices, and a quantization of the belief probability with $n_1$ levels, the number of belief states is $\mathcal{O}(n_1^m)$, where $m = \mathcal{O}(2^{N_P + N_S})$. Further, given a quantization of the observation space with $n_2$ levels, and the constraint value with $n_3$ levels, the computational complexity for determining the optimal decision sequence is $\mathcal{O}(d \cdot n_1^m n_2 n_3 \cdot T)$.

\section{Centralized Cognitive Time Slot Scheduling} \label{CCTS}
In the previous section, we argued that optimal scheduling using dynamic programming is computationally intractable. In this section, we present a low-cost approximation scheme that can be implemented and run in real-time. We denote this method as the \textit{Centralized Cognitive Time Slot Scheduling (CCTS)} scheme. We begin by reviewing the factors that contribute to the computational complexity: an average performance constraint and imperfect observation with delay. By exploring the unique characteristics of the system model and properties of the reward-to-go function, we develop the approximation scheme.

\subsection{Apply Local instead of Global Constraints}
We take the first step to revise the constraint over the entire horizon, which greatly complicates the dynamic programming solution due to an expanding dimension of the solution space. To this end, we approximate the original problem by posing stricter local constraints such that in each time slot $t$, the PU throughput loss level calculated from slot $t$ to the last slot $T$ must always be lower than the degradation coefficient, which is expressed as
\begin{equation}
V_{P,t} = \mathbb{E}\left\{\sum_{\tau=t}^T g_P(\bm{s}_\tau, \bm{\delta}_\tau) \bigg|\Psi_c \right\} \geq \beta \cdot V_{P,t}(\Psi_0), \quad \forall t \in \{1,2,\dots,T\},
\label{approx_constr}
\end{equation}
where $V_{P,t}(\Psi_0) = \mathbb{E}\left\{\sum_{\tau=t}^T g_P(\bm{s}_\tau, \bm{0}) \right\}$. The original constraint \eqref{constr} only requires that the inequality holds for $t=1$ while the stricter constraint \eqref{approx_constr} requires it holds $\forall t \in \{1,2,\dots,T\}$.  

\begin{prop}
The feasible set of \eqref{approx_constr} is not empty and whenever a sequence of decisions satisfies \eqref{approx_constr}, it also satisfies~\eqref{constr}.
\end{prop}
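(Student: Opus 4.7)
The plan is to verify the two claims separately by essentially direct inspection, since the proposition is structural rather than computational.

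For non-emptiness of the feasible set, I would exhibit the silent policy $\Psi_0$ itself as a witness. Substituting $\bm{\delta}_\tau = \bm{0}$ for all $\tau$ into the definition of $V_{P,t}$ gives $V_{P,t}(\Psi_0) = \mathbb{E}\{\sum_{\tau=t}^T g_P(\bm{s}_\tau, \bm{0})\}$, which is precisely the right-hand side reference value up to the factor $\beta$. Since $\beta \in (0,1)$, we have $V_{P,t}(\Psi_0) \geq \beta \cdot V_{P,t}(\Psi_0)$ trivially for every $t \in \{1,\dots,T\}$, so $\Psi_0$ satisfies \eqref{approx_constr} and the feasible set is non-empty.

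For the implication \eqref{approx_constr} $\Rightarrow$ \eqref{constr}, I would just specialize the local constraint at $t=1$. By construction, $V_{P,1}(\Psi_c) = \mathbb{E}\{\sum_{\tau=1}^T g_P(\bm{s}_\tau, \bm{\delta}_\tau) \mid \Psi_c\} = V_P(\Psi_c)$, and similarly $V_{P,1}(\Psi_0) = V_P(\Psi_0)$. Hence the $t=1$ instance of \eqref{approx_constr} is identical to \eqref{constr}, so any policy satisfying the full family \eqref{approx_constr} automatically satisfies the global constraint.

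No real obstacle is expected; the only thing worth being careful about is making explicit that the definition of $V_{P,t}$ under $\Psi_0$ coincides with the reference $V_{P,t}(\Psi_0)$, so that the inequality in the feasibility check reduces to $1 \geq \beta$. Everything else is immediate from the matching indexing of the sums.
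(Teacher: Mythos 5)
Your proof is correct, and it is actually more direct than the paper's. For non-emptiness the paper runs a backward induction from $t=T$ down to $t=1$: it assumes a feasible tail sequence from slot $k+1$ exists, expands the slot-$k$ constraint, and observes that $\bm{\delta}_k=\bm{0}$ always keeps the expression nonnegative. You instead exhibit the single global witness $\Psi_0$ and note that under it the left-hand side of \eqref{approx_constr} equals $V_{P,t}(\Psi_0)$ exactly, so the constraint reduces to $V_{P,t}(\Psi_0)\ge\beta\,V_{P,t}(\Psi_0)$, which holds because $\beta<1$ \emph{and} the per-slot throughput $g_P\ge 0$ (you should state that nonnegativity explicitly, since $1\ge\beta$ alone does not give the inequality if the reference value could be negative). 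Your route sidesteps a subtlety the paper's induction glosses over, namely that the distribution of $\bm{s}_\tau$ for $\tau>k$ depends on the choice of $\bm{\delta}_k$, so a tail sequence feasible under one prefix need not remain feasible under another. What the paper's induction buys in exchange is a slightly stronger, stage-wise fact that the algorithm actually uses: at every slot, given any feasible history, the feasible decision set $\bm{\Delta}_f$ in the recursion \eqref{approx_recur} is nonempty because silence is always available. Your argument establishes the proposition as literally stated but not this on-line extendability. The second half of your proof, specializing \eqref{approx_constr} at $t=1$ to recover \eqref{constr}, is identical to the paper's.
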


\begin{proof}
We prove this proposition by induction. First, for $t = T$, there exists feasible decisions that satisfy \eqref{approx_constr} because $\bm{\delta}_T = \bm{0}$ is always a valid choice. Next, let us assume that for $t = k+1$, a valid sequence of decisions, $\bm{\delta}_\tau$ ($k+1\leq \tau \leq T$), satisfying $\mathbb{E}\left\{\sum_{\tau=k+1}^T g_P(\bm{s}_\tau, \bm{\delta}_\tau) \big|\Psi_c \right\} \geq \beta \cdot \mathbb{E}\left\{\sum_{\tau=k+1}^T g_P(\bm{s}_\tau, \bm{0}) \right\}$ is found. Then for $t = k$, we want to select $\bm{\delta}_k$ such that the following expression is positive,
\begin{align}
&\mathbb{E}\left\{\sum_{\tau=k}^T g_P(\bm{s}_\tau, \bm{\delta}_\tau) \bigg|\Psi_c \right\} - \beta \cdot \mathbb{E}\left\{\sum_{\tau=k}^T g_P(\bm{s}_\tau, \bm{0}) \right\} \quad \\
&= \mathbb{E}\left\{g_P(\bm{s}_k, \bm{\delta}_k) + \sum_{\tau=k+1}^T g_P(\bm{s}_\tau, \bm{\delta}_\tau^*) \bigg|\Psi_c \right\} - \beta \cdot \mathbb{E}\left\{g_P(\bm{s}_k, \bm{0}) + \sum_{\tau=k+1}^T g_P(\bm{s}_\tau, \bm{0}) \right\} \\
&\geq \mathbb{E}\left\{g_P(\bm{s}_k, \bm{\delta}_k) \big|\Psi_c \right\} - \beta \cdot \mathbb{E}\left\{g_P(\bm{s}_k, \bm{0}) \right\}.
\label{constr_m}
\end{align}
Such $\bm{\delta}_k$ always exists because at least $\bm{\delta}_k = \bm{0}$ is a feasible decision. Thus by induction, a sequence of decisions satisfying \eqref{approx_constr} can be found for $\forall t \in \{1,2,\dots,T\}$. When $t = 1$, \eqref{approx_constr} is the same as the original constraint, which is satisfied by any sequence of decisions that satisfies \eqref{approx_constr}.
\end{proof}

With stricter constraints, the approximated optimization problem must have lower or equal objective value. Thus this approximation provides a lower bound for $V_t(\bm{\omega}_t,\eta_t)$, denoted as $\underline{V}_t(\bm{\omega}_t)$, with the relationship stated as follows
\begin{equation}
\underline{V}_t(\bm{\omega}_t) \leq V_t(\bm{\omega}_t,\eta_t), \hspace{8pt} \forall \eta_t \leq \mathbb{E}\left\{\sum_{\tau=t}^T g_P(\bm{s}_\tau, \bm{\delta}_\tau) \bigg|\Psi_c \right\}.
\end{equation}
This lower bound has a simpler expression in terms of a recursion:
\begin{equation}
\underline{V}_t(\bm{\omega}_t) = \max_{\bm{\delta}_t \in \bm{\Delta}_f}\Bigl\{\underbrace{\bm{\omega}_t^T \bm{g}(\bm{\delta}_t)}_{\textstyle\text{immediate}} + \underbrace{\mathop{\mathbb{E}}\left[V_{t+1}(\Phi(\bm{\omega}_t,\bm{\delta}_t, \bm{y}_{t+1}))\right]}_{\textstyle\text{future}}\Bigr\},
\label{approx_recur}
\end{equation}
where $\bm{\Delta}_f$ is the set of feasible decisions satisfying \eqref{approx_constr}.

\subsection{An Upper Bound of the Reward Function}
After reducing the complexity caused by the global constraint, the optimization problem still remains computationally expensive to solve, because of the uncountably infinite belief space. To resolve this issue, we explore the convexity of the reward-to-go function.

\begin{lem}
The function $\underline{V}_t(\bm{\omega}_t)$ is a convex function.
\end{lem}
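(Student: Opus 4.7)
I would prove the lemma by backward induction on $t$, running from $t=T$ down to $t=0$. The base case $\underline{V}_T(\bm{\omega}_T) = \max_{\bm{\delta}_T \in \bm{\Delta}_f}\bm{\omega}_T^T \bm{g}(\bm{\delta}_T)$ is immediate: for each fixed $\bm{\delta}_T$ the expression is linear (hence convex) in $\bm{\omega}_T$, and because $\bm{\Delta}_f \subseteq \{0,1\}^{N_S}$ is finite, $\underline{V}_T$ is the pointwise maximum of finitely many linear functions, which is convex (in fact piecewise linear convex).

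For the inductive step, suppose $\underline{V}_{t+1}$ is convex and consider the recursion \eqref{approx_recur}. For a fixed action $\bm{\delta}_t$, the immediate reward $\bm{\omega}_t^T \bm{g}(\bm{\delta}_t)$ is linear in $\bm{\omega}_t$, so the crux is to show that $F(\bm{\omega}_t) := \mathbb{E}_{\bm{y}_{t+1}}[\underline{V}_{t+1}(\Phi(\bm{\omega}_t, \bm{\delta}_t, \bm{y}_{t+1}))]$ is convex in $\bm{\omega}_t$. The key observation, read off from the Bayes update \eqref{eq:belief_update_C}, is that both the unnormalized posterior $p(\bm{y}|\bm{\omega}_t,\bm{\delta}_t)\Phi(\bm{\omega}_t,\bm{\delta}_t,\bm{y})$ (the numerator of \eqref{eq:belief_update_C}) and the marginal likelihood $p(\bm{y}|\bm{\omega}_t,\bm{\delta}_t)$ (its denominator) are linear in $\bm{\omega}_t$. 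This lets one express $F(\bm{\omega}_t)$ as $\int p(\bm{y}|\bm{\omega}_t,\bm{\delta}_t)\, \underline{V}_{t+1}(\Phi(\bm{\omega}_t,\bm{\delta}_t,\bm{y}))\,d\bm{y}$ and verify that for any $\bm{\omega}_t = \lambda\bm{\omega}^{(1)} + (1-\lambda)\bm{\omega}^{(2)}$, the posterior $\Phi(\bm{\omega}_t,\bm{\delta}_t,\bm{y})$ is itself a convex combination of $\Phi(\bm{\omega}^{(1)},\bm{\delta}_t,\bm{y})$ and $\Phi(\bm{\omega}^{(2)},\bm{\delta}_t,\bm{y})$ with weights $\lambda p(\bm{y}|\bm{\omega}^{(i)},\bm{\delta}_t)/p(\bm{y}|\bm{\omega}_t,\bm{\delta}_t)$. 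Applying convexity of $\underline{V}_{t+1}$ pointwise in $\bm{y}$, multiplying through by $p(\bm{y}|\bm{\omega}_t,\bm{\delta}_t)$ so the denominators cancel, and then integrating in $\bm{y}$ yields the convexity of $F$.

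The remainder is routine: the sum of a linear and a convex function is convex, and the pointwise maximum over the finite action set $\bm{\Delta}_f$ preserves convexity, completing the induction. The main obstacle is the key step above for $F$: the nonlinear normalization inside $\Phi$ can tempt one to conclude that the continuation value is not convex, and the crux is to recognize that passing to the unnormalized belief---a linear map of $\bm{\omega}_t$---is precisely what exposes the convex-combination structure of the Bayes update and allows the inductive hypothesis on $\underline{V}_{t+1}$ to be invoked cleanly.
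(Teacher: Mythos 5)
Your proof is correct, but it takes a genuinely different route through the inductive step than the paper does. The paper follows the classical Smallwood--Sondik piecewise-linear-and-convex argument: it rewrites the continuation term as $\bm{\omega}_{k-1}^T\,\mathbb{E}\left[\underline{V}_k(\bm{\omega}_k)\,|\,\bm{s}_{k-1}=\bm{s}\right]$, an inner product of the belief with a vector of state-conditioned expected values, so that each candidate in the maximization is \emph{linear} in $\bm{\omega}_{k-1}$ and $\underline{V}_{k-1}$ is a pointwise maximum of linear functions. You instead prove convexity of the continuation value $F(\bm{\omega}_t)=\mathbb{E}_{\bm{y}_{t+1}}\left[\underline{V}_{t+1}(\Phi(\bm{\omega}_t,\bm{\delta}_t,\bm{y}_{t+1}))\right]$ directly: you pass to the unnormalized Bayes update (linear in $\bm{\omega}_t$), exhibit the posterior at a mixture belief as a convex combination of the posteriors at the endpoint beliefs with weights $\lambda\, p(\bm{y}\,|\,\bm{\omega}^{(1)},\bm{\delta}_t)/p(\bm{y}\,|\,\bm{\omega}_t,\bm{\delta}_t)$ and its complement, apply convexity of $\underline{V}_{t+1}$ pointwise in $\bm{y}$, cancel the normalizing denominator, and integrate. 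Your argument is arguably the more careful one: the paper's identity implicitly treats the conditional-expectation vector as independent of $\bm{\omega}_{k-1}$, which is only fully justified by tracking the $\alpha$-vector representation (the maximizing linear piece of $\underline{V}_k$ generally depends on the realized observation and hence on the prior belief), whereas your Jensen-type manipulation needs nothing beyond convexity of $\underline{V}_{t+1}$ and handles the nonlinear normalization in $\Phi$ explicitly. What the paper's route buys in exchange is the stronger structural conclusion that $\underline{V}_t$ is piecewise linear with finitely many supporting hyperplanes --- a property the paper later invokes (``convex and piece-wise linear'') in Section \ref{DCTS} --- while your argument delivers convexity only, which is all the lemma states and all that the upper bound \eqref{reward_upper} actually requires. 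One point both you and the paper leave implicit: the maximization in \eqref{approx_recur} is over the feasible set $\bm{\Delta}_f$, which could in principle vary with the belief state; the pointwise-maximum step is clean only when that set is held fixed.
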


\begin{proof}
For $t = T$, the throughput given by each possible decision choice, $\bm{\omega}_T^T \cdot \bm{g}(\bm{\delta}_T)$ is a linear function of $\bm{\omega}_T$, and the point-wise maximum of linear functions, $\underline{V}_T(\bm{\omega}_T)$, is a convex function. Next, assume $\underline{V}_k(\bm{\omega}_k)$ ($2 \leq k \leq T-1$) is a convex function. The form of $\underline{V}_{k-1}(\bm{\omega}_{k-1})$ is as follows
\begin{align}
&\underline{V}_{k-1}(\bm{\omega}_{k-1}) = \max_{\bm{\delta}_{k-1}} \left\{\bm{\omega}_{k-1}^T \bm{g}(\bm{\delta}_{k-1}) + \mathop{\mathbb{E}}\left[\underline{V}_k(\bm{\omega}_k)\right] \right\} \nonumber \\
&= \max_{\bm{\delta}_{k-1}} \left\{\bm{\omega}_{k-1}^T \bm{g}(\bm{\delta}_{k-1}) + \bm{\omega}_{k-1}^T \mathop{\mathbb{E}}\left[\underline{V}_k(\bm{\omega}_k)|\bm{s}_{k-1}=\bm{s} \right] \right\} \nonumber \\
&= \max_{\bm{\delta}_{k-1}} \left\{\bm{\omega}_{k-1}^T \cdot \bigl[ \bm{g}(\bm{\delta}_{k-1}) + \mathop{\mathbb{E}}\left[\underline{V}_k(\bm{\omega}_k)|\bm{s}_{k-1}=\bm{s} \right] \bigl] \right\},
\end{align}
where $\mathop{\mathbb{E}}\left[\underline{V}_k(\bm{\omega}_k)|\bm{s}_{k-1}=\bm{s} \right]$ is a vector consisting of expected values for each specific $\bm{s} \in \bm{S}$. For arbitrary decision $\bm{\delta}_{k-1}$, $\bm{\omega}_{k-1}^T \cdot \bigl[ \bm{g}(\bm{\delta}_{k-1}) + \mathop{\mathbb{E}}\left[\underline{V}_k(\bm{\omega}_k)|\bm{s}_{k-1}=\bm{s} \right] \bigl]$ is a linear function of $\bm{\omega}_{k-1}$. Finally, $\underline{V}_{k-1}(\bm{\omega}_{k-1})$ is the point-wise maximum of linear functions, which is a convex function. By induction, the reward-to-go function $\underline{V}_t(\omega_t), \forall t \in \{1,\dots,T\}$, is a convex function.
\end{proof}

The convexity of the reward-to-go function enables us to obtain an upper bound of $\underline{V}_t(\bm{\omega}_t)$. If we decompose the belief state as
\begin{equation}
\bm{\omega}_t = a_1 \bm{v}_1 + \dots + a_n \bm{v}_n,
\end{equation}
where $\bm{v}_i$s come from an arbitrary set of belief state vectors and $\sum_{i=1}^n a_i = 1$, then $\underline{V}_t(\bm{\omega}_t)$ satisfies the following inequality
\begin{equation}
\underline{V}_t(\bm{\omega}_t) \leq \overline{V}_t(\bm{\omega}_t) = a_1 \underline{V}_t(\bm{v}_1) + \dots + a_n \underline{V}_t(\bm{v}_n),
\label{reward_upper}
\end{equation}
where $\overline{V}_t(\bm{\omega}_t)$ represents an upper bound of $\underline{V}_t(\bm{\omega}_t)$. Therefore we can approximate the optimal reward-to-go function by the optimal value of the upper bound. The remaining goal is to select a set of belief state vectors and optimize each of them on the right-hand-side of \eqref{reward_upper}.

\subsection{Upper Bound of the Future Reward}
The previous approximation enables us to obtain an approximate value for the reward-to-go function $\underline{V}_t(\bm{\omega}_t)$ by a linear combination of $\underline{V}_t(\bm{v}_i)$. However, it remains difficult to compute $\underline{V}_t(\bm{v}_i)$. We observe that the reward-to-go function \eqref{approx_recur} consists of two parts: an immediate reward and a future reward. The computational complexity lies in the future reward, because of the uncountably infinite observation space. To resolve this issue, we approximate the future reward by its upper bound, denoted as $\tilde{V}_t(\bm{\omega}_t)$, which is obtained by assuming no error in future observations. It is obvious that decision making under observation errors yields lower rewards. With noiseless observations, the delayed observation only results in a finite number of possible belief vectors. Specifically, $\bm{v}_i$s are columns of $\bm{P}(\bm{\delta}_{t-2}) \cdot \bm{P}(\bm{\delta}_{t-1})$, because of the delay of two time slots. In this way, we have a finite number of states in the belief space and we can solve the problem via value iteration, with the following recursion: 
\begin{equation}
\tilde{V}_t (\bm{\omega}_t) = \max_{\bm{\delta}_t \in \bm{\Delta}_f}\Bigg[\bm{\omega}_t^T \bm{g}(\bm{\delta}_t)+\sum_{\bm{\omega}_{t+1}}P(\bm{\omega}_{t+1}|\bm{\omega}_t,\bm{\delta}_t)\tilde{V}_{t+1}(\bm{\omega}_{t+1})\Bigg],
\label{eq:V_tilde_C}
\end{equation}
which enables us to obtain upper bounds for the future rewards iteratively. The expected throughput $\tilde{V}_t(\bm{\omega}_t)$ can be decomposed as $\tilde{V}_{P,t}(\bm{\omega}_t) + \tilde{V}_{S,t}(\bm{\omega}_t)$ and the feasible set $\bm{\Delta}_f$ contains decisions satisfying $\tilde{V}_{P,t}(\bm{\omega}_t) \geq \beta \cdot V_{P,t}(\Psi_0)$.

With the set of belief vectors obtained from delayed observation and the approximated future rewards, we can optimize each individual term of the sum on the right-hand-side of \eqref{reward_upper} and the optimal decision for arbitrary belief state $\bm{\omega}_t$ can be constructed by a stochastic time-sharing approach. The pseudo-code for the proposed CCTS scheme is shown in Algorithm 1. The proposed scheme consists of two phases: offline planning and online decision-making. To analyze the computational complexity of both phases, consider $m$ possible system states and $d$ decision choices. For the offline planning phase, because the size of the basis set $\{v_1,\dots,v_n\}$ is $n = d^2 \cdot m$, the computational complexity of determining $\tilde{V}_t (\bm{\omega}_t)$ is $\mathcal{O}(d^3 m T)$. During online decision-making, maximizing over $d$ decisions leads to a computational complexity of only $\mathcal{O}(d)$.

\begin{algorithm}
\begin{algorithmic}[1]
\caption{The Centralized Cognitive Time Slot Scheduling Scheme}
\REQUIRE{state set $\bm{S}$, decision set $\bm{\Delta}$, basis set $\{\bm{v}_1, \dots, \bm{v}_n\}$}
\STATE Compute $\bm{P}(\bm{\delta})$, $\forall \bm{\delta}$, as described in Section \ref{SM-TP}.
\STATE Compute $\bm{g}(\bm{\delta})$, $\forall \bm{\delta}$, according to \eqref{eq:th_C}.
\HEADER{Offline Planning}
\FOR{t = T:-1:1}
\STATE Recursively compute $\tilde{V}_t(\bm{v}_i)$, $\forall i$, based on \eqref{eq:V_tilde_C}.
\ENDFOR
\ENDHEADER
\HEADER{Online Decision Making}
\FOR{t = 1:T}
\STATE Each SU takes action according to decision $\bm{\delta}_t$.
\STATE The CC obtains observation $\bm{y}_{t-1}$ and updates belief $\bm{w}_t$ according to \eqref{eq:belief_update_C}.
\STATE Find optimal decision $\bm{\delta}_{t+1}$ by maximizing $\tilde{V}_{t+1}(\bm{w}_{t+1}|\bm{w}_t)$ according to \eqref{eq:V_tilde_C}.
\STATE The CC broadcast decision $\bm{\delta}_{t+1}$ to all SUs.
\STATE SUs send observed $\bm{y}_{t}$ to the CC.
\ENDFOR
\ENDHEADER
\label{alg:CCTS}
\end{algorithmic}
\end{algorithm}

\section{The Optimal Decentralized Scheduling Problem} \label{ODSP}
In the centralized scheduling setup, with the help of a central controller, cooperation among SUs enables them to have a better understanding of the underlying system state and avoid mutual interference. However, because of the large propagation delay in UAC, centralized scheduling faces huge communication overhead. Furthermore, deployment and maintenance of central controllers remains difficult. Therefore, it is desirable to develop decentralized scheduling schemes for UAN. In this section, we formulate the optimal interference-constrained scheduling problem for a \textit{decentralized} SU network. Our goal is to maximize the expected end-to-end throughput of the overall system over a finite horizon of $T$ time slots, in a decentralized manner.

\subsection{Problem Formulation}
Over the total $T$ time slots, each SU should make independent decisions based solely on its previous observations and decisions. Denote a policy for SU $i$ as $\Psi^i$, which is a mapping from its histories to a decision,
\begin{equation}
\delta_t^i = \Psi^i(\delta_0^i, \dots,\delta_{t-1}^i, y_0^i, \dots, y_{t-1}^i),
\end{equation}
then the optimization problem can be formulated as follows
\begin{align}
&\text{Maximize:} & &V(\Psi^1,\dots,\Psi^{N_S}) = \mathbb{E}\left\{\sum_{t=1}^T g(\bm{s}_t, \bm{\delta}_t) \bigg| \Psi^1,\dots,\Psi^{N_S} \right\} \\
&\text{Subject to:} & &\mathbb{E}\left\{\sum_{t=1}^T g_P(\bm{s}_t, \bm{\delta}_t) \bigg| \Psi^1,\dots,\Psi^{N_S} \right\} \geq \beta\cdot V_P(\Psi_0),
\end{align}
where the expectation is over states and observations. Each SU only has observability over a limited portion of the overall system and has to make independent decisions. There are two main differences between the centralized scheduling problem and the decentralized problem. First, in the decentralized case, individual observation data and decision strategy cannot be shared among SUs, so that each SU has to make decision without cooperation with other SUs. Second, since the SUs do not communicate with any coordinator in the decentralized case, the delay is reduced to one time slot. This problem can be viewed as a decentralized POMDP (Dec-POMDP), finding the optimal solution for a finite-horizon Dec-POMDP has the time complexity of NEXP-complete~\cite[Chapter~7]{kochenderfer2015decision}. Therefore we formulate an approximation of the original problem.

\subsection{The Approximated Problem} \label{ODSP-AP}
In the approximated problem, instead of end-to-end throughput, we maximize the per hop throughput for each SU hop. Thus, the original optimization problem is divided into $N_S$ sub-problems, one for each SU hop. Notice that each SU may not be able to sense activities of the entire system, but as a limited portion of it.

Let $C_i$ denote the set of PUs within the one slot delay region of the $i$-th SU. Then denote the joint state of PUs in $C_i$ as $\bm{s}_{i,t}$, which is a portion of the entire PU state $\bm{s}_t$. 
\begin{equation}
\bm{s}_{i,t} \triangleq (\dots, s_{i,t}^j, \dots)^T \in \bm{S}_i \equiv \{0,1\}^{\lVert C_i \rVert}, \quad j \in C_i,
\end{equation}
where $s_{i,t}^j$ is the state of the $j$-th PU hop in time slot $t$. The temporal evolution of $\bm{s}_{i,t}$ can be described by a Markov Chain, the only undetermined parameter is the traffic density at the first PU node in $C_i$.
\begin{rem}
Although the traffic density coefficients at each PU node are determined by the scheduling policies of the SUs, they are upper-bounded by $\alpha_1$ and $\alpha_2$, which are the traffic density coefficients at the PU transmitter. Therefore we use $\alpha_1$ and $\alpha_2$ as the traffic density coefficients to derive SU policies so that the interference to PUs is bounded below.
\end{rem}
Thus we can calculate the transition matrix for the behavior of PUs in $C_i$ using the traffic density coefficients $\alpha_1$, $\alpha_2$ and the packet loss probability,
\begin{equation}
p^j_P(\bm{s}_{i,t}, \delta_t^i) \approx 1 - \prod_{k=1}^{L_P}\left[1 - P_b^k(\bm{s}_{i,t},\delta_t^i)\right].
\end{equation}
The resulting transition matrix is denoted as $\bm{P}_i(\delta_t^i)$. For the behavior of other SUs, no prior knowledge is known in the process of policy design. Periodic scheduling with spatial reuse has been investigated for interference alleviation for a single class of users \cite{zhang2010analysis,chen2007delay}. We utilize this approach to alleviate interference among SUs, and focus on managing interference to PUs.

Each SU performs channel sensing independently to collect noisy measurements. The observation model for the $i$-th SU in time slot $t$ is defined in \eqref{eq:obs_model}. The information vector in time slot $t$, denoted as $\bm{I}_t^i$, can be expressed as
\begin{equation}
\bm{I}_t^i = (y_1^i,\dots,y_{t-1}^i,\delta_1^i, \dots,\delta_{t-1}^i),
\end{equation}
which is of expanding dimension. However, for control purposes, we can use the probability distribution of state $\bm{s}_{i,t}$, denoted by $\bm{\omega}_t^i$, as a sufficient statistic. $\bm{\omega}_t^i$ is also known as belief state and is defined as
\begin{equation}
\bm{\omega}_t^i = [\dots, \omega_t^i(s), \dots]^T; \quad \omega_t^i(s) \triangleq \mathbb{P}(\bm{s}_{i,t} = s|\bm{I}_t^i), s \in \bm{S}_i.
\end{equation} 

Given observation $y_t^i$, the $i$-th SU can update its belief state via Bayes' Rule as follows
\begin{equation}
\omega_t^i(s) \triangleq \mathbb{P}(\bm{s}_{i,t} = \bm{s}|\bm{I}_t^i) = \frac{\left[\mathbb{P}(y_t^i|\bm{s}_{i,t}=\bm{s})\right][(\bm{\omega}_{t-1}^i)^T\bm{P}_i(\delta_{t-1}^i)_{\bm{s}}]}{\sum_{\bm{s}' \in \bm{S}_i}\left[\mathbb{P}(y_t^i|\bm{s}_{i,t}=\bm{s}')\right][(\bm{\omega}_{t-1}^i)^T \bm{P}_i(\delta_{t-1}^i)_{\bm{s}'}]},
\label{eq:belief_update_D}
\end{equation}
where $\bm{P}(\delta_t^i)_{\bm{s}}$ represents the column of the transition matrix $\bm{P}(\delta_t^i)$ corresponding to state $\bm{s}$ and $\mathbb{P}(y_t^i|\bm{s}_{i,t}=\bm{s})$ is the probability distribution function of a Gaussian distribution evaluated at $y_t^i$. Further, we denote the belief state update rule as $\Phi_i(\cdot)$, then $\bm{\omega}_t^i = \Phi_i(\bm{\omega}_{t-1}^i,\delta_{t-1}^i,y_t^i)$.

Notice that because of the large propagation delay, the $i$-th SU can only obtain $\bm{\omega}_t^i$ at the end of time slot $t$. Therefore it has to make decision for time slot $t$ only based on the belief $\bm{\omega}_{t-1}^i$ calculated from previous observation.

Now we can formulate the optimal interference-constrained scheduling problem. Denote the one slot throughput of PUs in $C_i$ as $g_P^i(\cdot)$, which is defined as follows
\begin{equation}
g_P^i(\bm{s}_{i,t}, \delta_t^i) \triangleq L_P \cdot s_{i,t}^r \cdot p_P^r(\bm{s}_{i,t}, \delta_t^i),
\end{equation}
where $r$ is the PU index of the end receiver in $C_i$. The one slot throughput for the $i$-th SU hop is defined as:
\begin{equation}
g_S^i(\bm{s}_{i,t}, \delta_t^i) \triangleq L_S^i \cdot \delta_t^i \cdot p_S^i(\bm{s}_{i,t}, \delta_t^i),
\end{equation}
where $L_S^i$ is the packet size of the $i$-th SU hop. The total throughput in slot $t$ is denoted as $g^i(\cdot)$, and defined as
\begin{equation}
g^i(\bm{s}_{i,t}, \delta_t^i) \triangleq g_P^i(\bm{s}_{i,t}, \delta_t^i) + g_S^i(\bm{s}_{i,t}, \delta_t^i).
\label{eq:th_D}
\end{equation}

The PU throughput degradation constraint of the original optimization problem is not applicable because each SU has limited observability such that the overall PU performance can not be guaranteed. Therefore we approximate the original end-to-end PU throughput constraint by the constraint on the throughput degradation of PUs in $C_i$. Denote the expected accumulated throughput of both PUs and the $i$-th SU hop over a finite horizon as $V^i$, then the optimization problem can be formulated as follows
\begin{align}
&\text{Maximize:} & &V^i(\Psi^i) = \mathbb{E}\left\{\sum_{t=1}^T g^i(\bm{s}_{i,t}, \delta_t^i) \bigg| \Psi^i \right\} \\
&\text{Subject to:} & & \mathbb{E}\left\{\sum_{t=1}^T g^i_P(\bm{s}_{i,t}, \delta_t^i) \bigg| \Psi^i \right\} \geq \bar{\beta} \cdot \mathbb{E}\left\{\sum_{t=1}^T g^i_P(\bm{s}_{i,t}, 0) \right\},  \\
& & &\delta_t^i = 0, \quad \forall t \not\equiv i \pmod{R},
\end{align}
where $\bar{\beta}$ is the local degradation coefficient, which is related to $\beta$ as $\bar{\beta} = \beta^{(1/N_S)}$. The second constraint states that each SU is given an opportunity to access the channel periodically, with a spatial reuse factor $R$. For this paper, we use $R = 3$, which is appropriate for mutual interference alleviation and avoidance of the hidden terminal problem. Using the belief state $\bm{\omega}_t^i$, we can rewrite the objective function as
\begin{equation}
V^i(\Psi^i) = \mathbb{E}\left\{\sum_{t=1}^T (\bm{\omega}_t^i)^T \cdot \bm{g}^i(\delta_t^i) \bigg| \Psi^i \right\}
\end{equation}
where $\bm{g}^i(\delta_t^i)$ is the vector that consists of total throughput at each state.

\subsection{Optimal Packet Size}
The significant propagation delay in underwater acoustic channels provides unique opportunity for scheduling in the time domain. The desired signal and potentially interfering signal, even if transmitted at the same time, may not overlap at the receiver side. This phenomenon is illustrated in Fig. \ref{fig:packet_diagram}, where a SU can avoid interference to a certain PU hop by properly adjusting its packet size. Thus, the interference between any SU and its surrounding PUs is determined by the packet size of that SU hop. As each SU has distinct geometric relationship with the PU network, we expect the optimal packet size for each SU hop to be different. 

\begin{figure}[!t]
\centering
\subfigure[Normal SU packet size, collision occurs.]{
\includegraphics[width=0.48\linewidth]{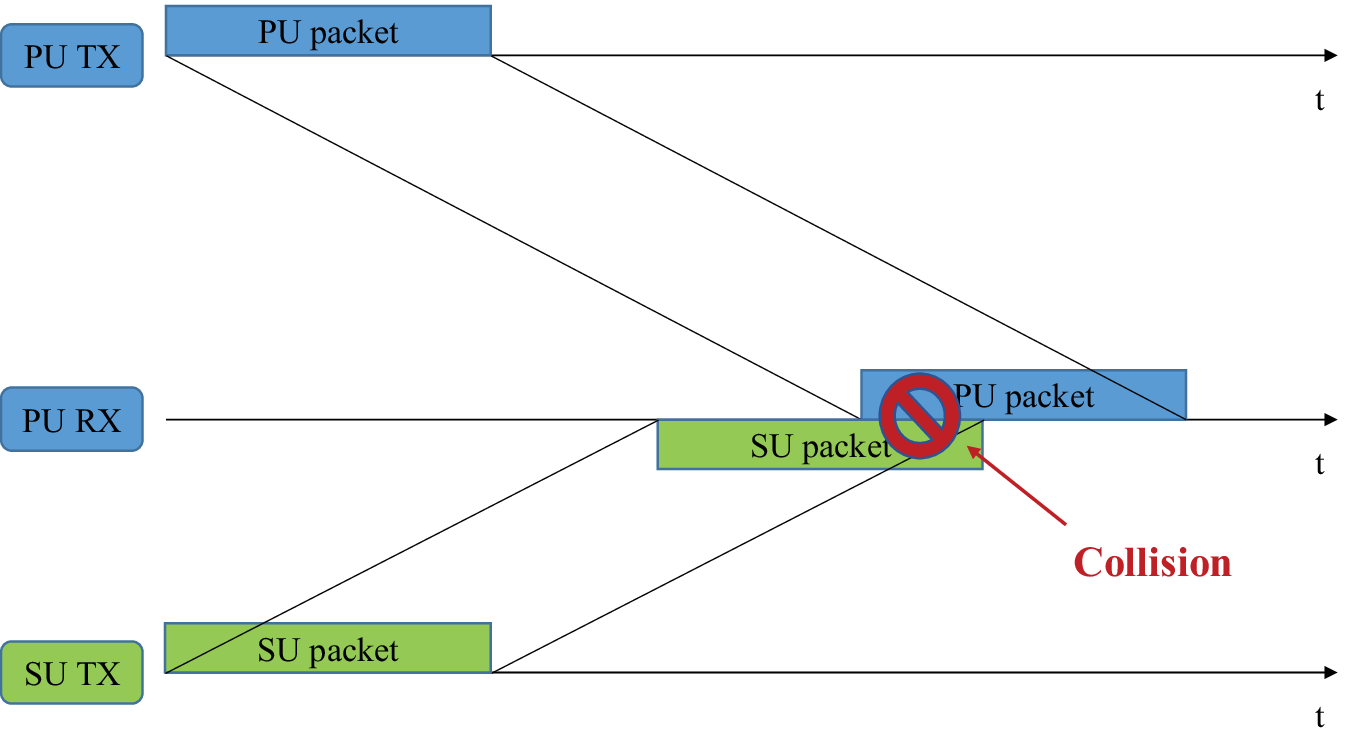}}
\subfigure[Reduced SU packet size, interference-free.]{
\includegraphics[width=0.48\linewidth]{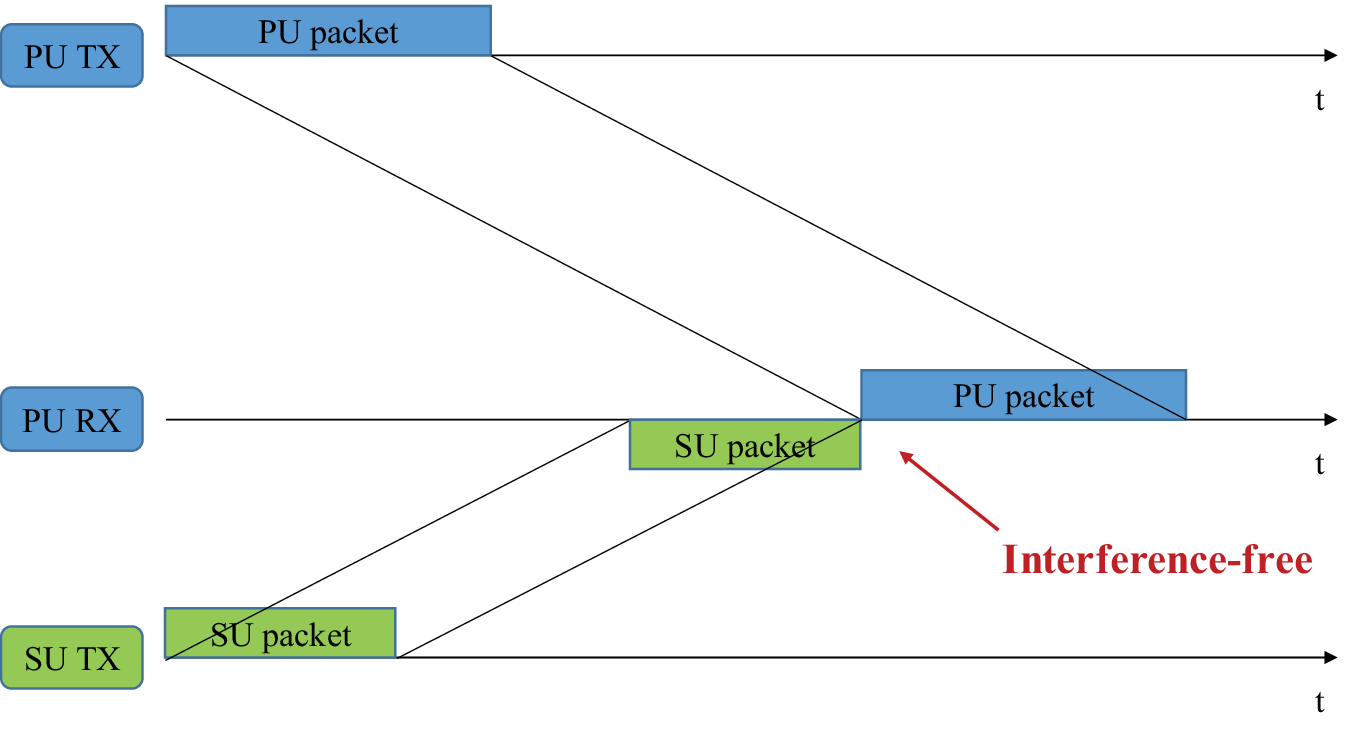}}
\caption{Scenario analysis for the influence of packet size.}
\label{fig:packet_diagram} 
\end{figure}

As we stated previously, the optimal decentralized scheduling problem has a time complexity of NEXP-complete. Jointly optimizing packet sizes and scheduling of SUs further complicates the problem, thus the joint problem is not computationally feasible. We propose to determine packet sizes by assuming a simple decision strategy for SUs, before solving the scheduling problem. Assume the $i$-th SU transmits data with probability $q_k^i$, given previous state $k$, such that the decision of the $i$-th SU is a Bernoulli random variable with parameter $q_k^i$, $\delta^i(k) \sim \text{Ber}(q_k^i)$. The probability distribution of states is affected by the SU activity, which is unknown. We approximate the true state distribution by the steady state distribution without SU interference, denoted by $\bm{\pi}^i$. The optimization problem is formulated as: 
\begin{align}
&\text{Maximize} & &f_i(L_S^i, q_k^i) = (\bm{\pi}^i)^T \bm{P}_i \cdot \mathbb{E}\left[\bm{g}^i(\delta^i(k))\right] \\
&\text{Subject to} & &(\bm{\pi}^i)^T \bm{P}_i \cdot \mathbb{E}\left[\bm{g}_P^i(\delta^i(k))\right] \geq \beta \cdot (\bm{\pi}^i)^T \bm{P}_i \cdot \bm{g}_P^i(0),
\end{align}
where $\mathbb{E}\left[\bm{g}^i(\delta^i(k))\right]$ and $\mathbb{E}\left[\bm{g}_P^i(\delta^i(k))\right]$ can be calculated as
\begin{equation}
\mathbb{E}\left[\bm{g}^i(\delta^i(k))\right] = q_k^i \cdot \bm{g}^i(1) + (1 - q_k^i) \cdot \bm{g}^i(0),
\end{equation}
\begin{equation}
\mathbb{E}\left[\bm{g}_P^i(\delta^i(k))\right] = q_k^i \cdot \bm{g}_P^i(1) + (1 - q_k^i) \cdot \bm{g}_P^i(0).
\end{equation}
Notice that for a fixed packet length $L_S^i$, $\bm{g}^i(0)$, $\bm{g}^i(1)$, $\bm{g}_P^i(0)$ and $\bm{g}_P^i(1)$ are constant numbers, thus the problem is a linear program. 

Another important fact is that, with increasing packet size, the expected throughput drops drastically as the packet starts to overlap with packets of other users. Define the critical size $L_{c,i}^P(i,j)$ to be the maximum packet size such that the $j$-th PU is not interfered by the $i$-th SU. Similarly, define critical size $L_{c,i}^S(i,j)$ to be the maximum packet size such that the $i$-th SU is not interfered by the $j$-th PU. The set of critical sizes is defined as
\begin{equation}
L_{C,i} \triangleq \left\{L_C^P(i,j)| \forall j\right\} \cup \left\{L_C^S(i,j)| \forall j\right\}.
\end{equation}
For each element of $L_{c,i}$, we optimize $q_k^i$s by linear programming. Then we select the optimal value $\left(L_S^i\right)^*$, according to the following equation
\begin{equation}
\left(L_S^i\right)^* = \underset{L_S^i \in L_c}{\arg\max} \quad f_i\left(L_S^i, \left(q_k^i\right)^*\right).
\end{equation}

\section{Decentralized Cognitive Time Slot Scheduling} \label{DCTS}
In the previous section, we argued that optimal decentralized scheduling is computationally infeasible, thus we divide the original problem into several approximated sub-problems. Although we transforms one Dec-POMDP problem into several POMDP problems by decoupling, they are still computationally prohibitive. In this section, we present a low-cost approximation scheme that can be implemented and run in real-time. We denote this method as the \textit{Decentralized Cognitive Time Slot Scheduling (DCTS)} scheme. By exploring the unique characteristics of the system model and properties of the reward-to-go function presented in Section \ref{CCTS}, we develop a computationally efficient scheme, enabled by a sequence of approximations. As weill see in the numerical results, the proposed scheme despite being based on these approximations, offers strong performance.

Denote the reward-to-go function as $V_t^i(\cdot)$.
Firstly, as the global constraint greatly complicates computation, we approximate it by a set of local constraints,
\begin{equation}
\mathbb{E}\left\{\sum_{\tau=t}^T g^i_P(\bm{s}_{i,t}, \delta_t^i) \bigg| \Psi^i \right\} \geq \bar{\beta} \cdot V_{P,t}^i(\Psi_0), \quad \forall t \in \{1,2,\dots,T\},
\end{equation}
where $V_{P,t}^i(\Psi_0) = \mathbb{E}\left\{\sum_{\tau=t}^T g^i_P(\bm{s}_{i,t}, 0)\right\}$. In this way, we obtain a lower bound of $V_t^i$, denoted as $\underline{V}_t^i$.

Secondly, as the reward-to-go function is convex and piece-wise linear, we can obtain an upper bound of it. If we decompose the belief state as
\begin{equation}
\bm{\omega}_t^i = a_1 \bm{v}_1 + \dots + a_n \bm{v}_n,
\end{equation}
where $\bm{v}_i$s come from an arbitrary set of belief state vectors and $\sum_{i=1}^n a_i = 1$. Then $\underline{V}_t^i(\bm{\omega}_t^i)$ satisfies the following inequality
\begin{equation}
\underline{V}_t^i(\bm{\omega}_t^i) \leq \overline{V}_t^i(\bm{\omega}_t^i) = a_1 \underline{V}_t^i(\bm{v}_1) + \dots + a_n \underline{V}_t^i(\bm{v}_n),
\end{equation}
where $\overline{V}_t^i(\bm{\omega}_t^i)$ represents an upper bound of $\underline{V}_t^i(\bm{\omega}_t^i)$. Therefore we can approximate the optimal reward-to-go function by the optimal value of the upper bound.

Thirdly, we approximate the future reward by its upper bound, denoted as $\tilde{V}_t^i(\bm{\omega}_t^i)$, by assuming no errors in future observations. It is obvious that decision making under observation error yields lower rewards. With noiseless observation, delayed observation only results in a finite number of possible belief vectors (\ie, $\bm{v}_i$s are columns of $\bm{P}_i(\delta^i)$). In this way, we have finite states in the belief space and we can solve the problem by the value iteration method, with the following recursion
\begin{equation}
\tilde{V}_t^i (\bm{\omega}_t^i) = \max_{\delta_t^i \in \Delta_f^i}\Bigg[(\bm{\omega}_t^i)^T \bm{g}^i(\delta_t^i)+\sum_{\bm{\omega}_{t+1}^i}P(\bm{\omega}_{t+1}^i|\bm{\omega}_t^i,\delta_t^i)\tilde{V}_t^i(\bm{\omega}_{t+1}^i)\Bigg],
\label{eq:V_tilde_D}
\end{equation}
where $\Delta_f^i$ is the set of feasible decisions.

Based on the aforementioned approximations, the decentralized scheduling strategy has a threshold structure. Specifically, for SU $i$, transmission is a better choice in terms of total throughput if
\begin{equation}
(\bm{\omega}_t^i)^T\bm{g}^i(1)+\sum_{\bm{\omega}_{t+1}^i}P(\bm{\omega}_{t+1}^i|\bm{\omega}_t^i,1)\tilde{V}_t^i(\bm{\omega}_{t+1}^i) \geq (\bm{\omega}_t^i)^T \bm{g}^i(0)+\sum_{\bm{\omega}_{t+1}^i}P(\bm{\omega}_{t+1}^i|\bm{\omega}_t^i,0)\tilde{V}_t^i(\bm{\omega}_{t+1}^i),
\end{equation}
which is equivalent to:
\begin{equation}
(\bm{\omega}_t^i)^T \left[\bm{g}^i(1)-\bm{g}^i(0)\right] \geq \sum_{\bm{\omega}_{t+1}^i} \left[P(\bm{\omega}_{t+1}^i|\bm{\omega}_t^i,0) - P(\bm{\omega}_{t+1}^i|\bm{\omega}_t^i,1) \right] \cdot \tilde{V}_t^i(\bm{\omega}_{t+1}^i).
\label{eq:threshold_D}
\end{equation}
The PU throughput degradation constraint can also be expressed as a threshold structure:
\begin{equation}
(\bm{\omega}_t^i)^T\bm{g}_P^i(\delta_t^i)+\sum_{\bm{\omega}_{t+1}^i}P(\bm{\omega}_{t+1}^i|\bm{\omega}_t^i,\delta_t^i)\tilde{V}_{P,t}^i(\bm{\omega}_{t+1}^i) \geq \bar{\beta} \cdot V_{P,t}^i(\Psi_0),
\label{eq:threshold_constr_D}
\end{equation}
which is used to determine the feasible set $\Delta_f^i$.

Therefore, we can compute the approximated reward-to-go values $\tilde{V}_t^i(\bm{\omega}_{t+1}^i)$ in advance. In run-time, the decision can be made by making comparison to thresholds. The pseudo-code for the proposed DCTS scheme for SU $i$ is shown in Algorithm 2. The DCTS scheme also consists of two phases: offline planning and online decision-making. We analyze the computational complexity of the DCTS scheme in terms of $m_i$ possible system states and $d_i$ decision choices, where $d_i = 2$ because of the binary decision. For the offline planning phase, the computational complexity of determining $\tilde{V}_t^i (\bm{\omega}_t^i)$ is $\mathcal{O}(d_i^2 m_i T)$. During online decision-making, checking a threshold condition leads to a computational complexity of only $\mathcal{O}(d_i)$.

\begin{algorithm}
\begin{algorithmic}[1]
\caption{The Decentralized Cognitive Time Slot Scheduling Scheme}
\REQUIRE{state set $\bm{S}_i$, decision set $\Delta_i$, basis set $\{\bm{v}_1, \dots, \bm{v}_n\}$}
\STATE Compute $\bm{P}_i(\delta^i)$, $\forall \delta^i$, as described in Section \ref{ODSP-AP}.
\STATE Compute $\bm{g}^i(\delta^i)$, $\forall \delta^i$, according to \eqref{eq:th_D}.
\HEADER{Offline Planning}
\FOR{t = T:-1:1}
\STATE Recursively compute $\tilde{V}_t^i(\bm{v}_i)$, $\forall i$, based on \eqref{eq:V_tilde_D}.
\ENDFOR
\ENDHEADER
\HEADER{Online Decision Making}
\FOR{t = 1:T}
\STATE Update belief $\bm{w}_t^i$ based on observation $y_{t-1}^i$ according to \eqref{eq:belief_update_D}.
\STATE Find optimal decision $\delta_t^i$ by checking the threshold conditions \eqref{eq:threshold_D} and \eqref{eq:threshold_constr_D}.
\STATE Collect observational data $y_t^i$.
\ENDFOR
\ENDHEADER
\label{alg:DCTS}
\end{algorithmic}
\end{algorithm}

\section{Numerical Results} \label{NR}
In this section, we numerically evaluate the performance of the CCTS scheme presented in Section \ref{CCTS} and the DCTS scheme presented in Section \ref{DCTS}. The experiment is based on the acoustic multipath channel simulator developed in \cite{qarabaqi2013statistical}, which takes into account small-scale and large-scale channel variations.

We consider a CM-UAN with $N_P = 4$ PU hops and $N_S = 4$ SU hops, with a depth of $100$ \unit{m}. The end-to-end distances for both the PU and SU multi-hop networks are set as $10$ \unit{km}, thus the per hop distance for both PUs and SUs is $2.5$ \unit{km}. The crossing-lines topology, where significant interference exists between PUs and SUs is selected to facilitate the controlled changing of the topology. Although we focus on a specific topology for simulation, the proposed scheme is not dependent on a particular topology, under the assumption of multi-hop relaying.

The carrier frequency is $f_c = 32$ kHz with $4$ \unit{kHz} bandwidth and the transmit power for both PUs and SUs is $130$ \unit{dB} re \unit{\micro\pascal}. For PUs, packet size is $L_P = 1500$ bytes and the transmission rate is $10$ \unit{kbps} for both PUs and SUs. Taking both propagation and transmission delays into account, the time slot length is about $3$ \unit{s}. The performance is evaluated via simulations over a finite horizon of $T = 1000$ slots, and each Monte Carlo run is repeated $100$ times.

We conduct the experiment on a computer with a $4.1$ \unit{G\hertz} Intel-i5 CPU and a $16$ \unit{GB} memory. For the proposed DCTS scheme, the offline planning takes $1.3$ \unit{s} while the online decision-making takes $0.18$ \unit{ms} per time step, on average. The running time result confirms that the proposed scheme is computationally efficient and applicable in real-world scenarios.

\subsection{Comparison Algorithms}
Based on the aforementioned settings, we compare proposed algorithms with three comparison schemes, which can be classified into time division multiplexing (TDM) and frequency division multiplexing (FDM). We illustrate the difference in their operating mechanism in Fig. \ref{fig:T_vs_F}.

\begin{figure}[!t]
\centering
\subfigure[Time division multiplexing.]{
\includegraphics[width=0.48\linewidth]{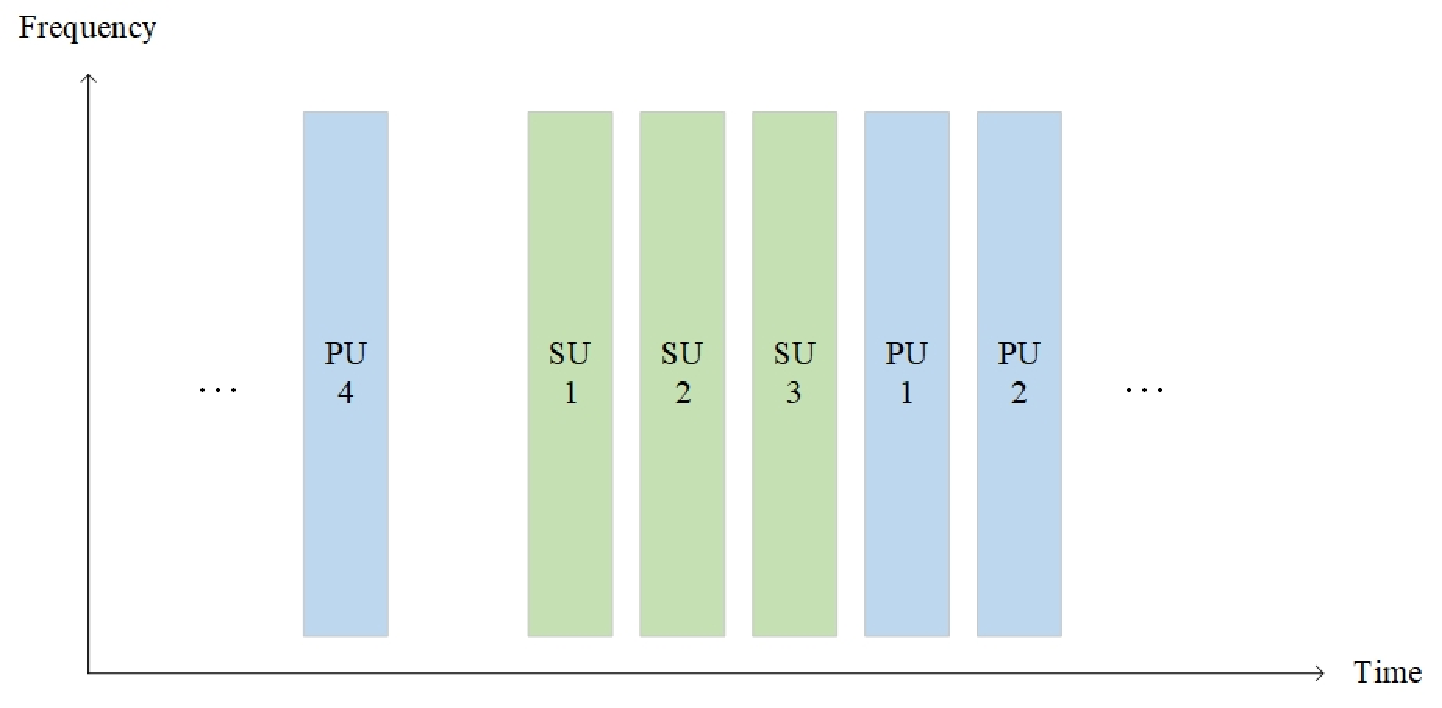}}
\subfigure[Frequency division multiplexing.]{
\includegraphics[width=0.48\linewidth]{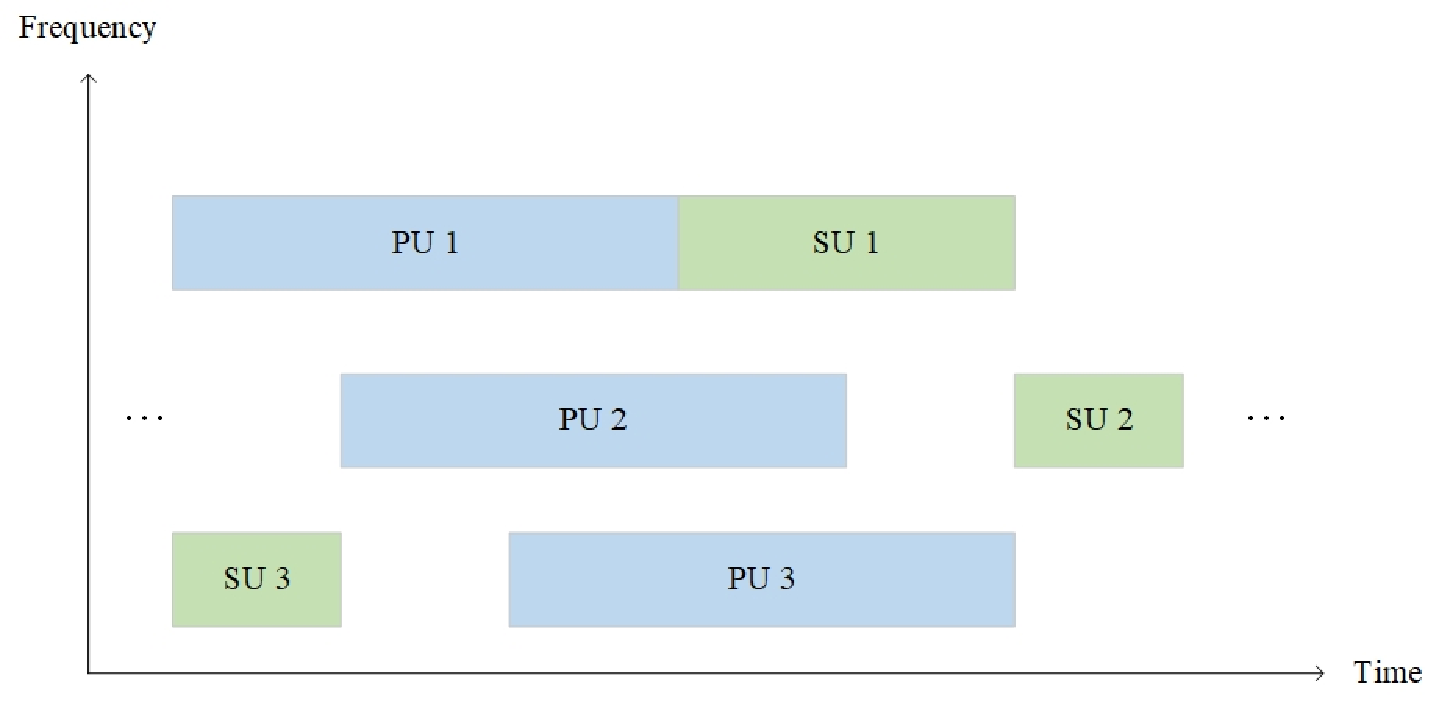}}
\caption{Illustration of time division multiplexing scheme and frequency division multiplexing scheme in cognitive UAN.}
\label{fig:T_vs_F}
\end{figure}

The conventional time-division multiplexing (C-TDM) algorithm \cite[Chapter~6]{xiao2008cognitive} is the first comparison scheme. In order to take the constraint into account, SUs have the opportunity to make access decisions with probability $1-\bar{\beta}$. Based on channel measurements, SUs make decisions to transmit if the estimated channel occupancy probability is lower than a fixed threshold. Here we select the fixed threshold to be $0.5$ such that the optimal decision maximizes the probability of successful transmission. The decision rule can be expressed as $\delta_t^i = X \cdot Y^i_t$ where $X \sim \text{Bernoulli}(1-\bar{\beta})$ and $Y_t^i$ is defined as follows
\begin{equation}
Y_t^i = \begin{cases}
1, \text{ if } \sum_{\bm{s}} \omega_t^i(\bm{s}) \cdot \mathbb{I}(\bm{s} \neq \bm{0}) \leq 0.5 \\
0, \text{ otherwise}
\end{cases},
\end{equation}
where $\mathbb{I}(\cdot)$ is the indicator function and $\sum_{\bm{s}} \omega_t^i(\bm{s}) \cdot \mathbb{I}(\bm{s} \neq \bm{0})$ represents the channel occupancy probability.

Inspired by \cite{zeng2017distributed}, we consider another time-domain scheduling scheme based on interference alignment (IA). Since the proposed algorithm in \cite{zeng2017distributed} is not designed for cognitive networks with coexisting PUs and SUs, we develop an adapted IA algorithm for more fair comparison. To achieve interference-free communication, the PUs periodically send their schedule to the SUs, with one slot communication overhead. With scheduling information of PUs, each SU makes decisions by aligning interference patterns. The interference among SUs is alleviated by a random access protocol, such that each SU becomes active with a particular probability. Notice that the one slot overhead we assumed for information exchange may be a lower bound on the actual overhead needed.

Apart from scheduling in the time domain, we also consider scheduling schemes in the frequency domain. The conventional frequency division multiplexing (C-FDM) algorithm is the third comparison scheme. The total bandwidth is divided into three sub-channels, each with $1.2$ \unit{kHz} bandwidth and center frequency of $30.6$ \unit{kHz}, $32$ \unit{kHz}, $33.4$ \unit{kHz}, respectively. Guard bands of $0.2$ \unit{kHz} are inserted between sub-channels to prevent frequency spreading. PUs and SUs are assigned to different sub-channels, so there is no common channel usage among every three consecutive nodes. The transmission rate of each sub-channel is set as $3$ \unit{kbps} proportionally, and the SU packet sizes are reduced accordingly to fit the time slot length. To guarantee PU throughput, SUs are given opportunities to make access decisions with probability $1-\bar{\beta}$. Given the opportunity, each SU makes decisions independently about transmission based on channel measurements. Notice that each SU only cares about the occupation probability of the assigned channel. The decision rule can be expressed as $\delta_t^i = X \cdot Z^i_t$ and $Z_t^i$ is defined as follows 
\begin{equation}
Z_t^i = \begin{cases}
1, \text{ if } \sum_{\bm{s}} \omega_t^i(\bm{s}) \cdot \mathbb{I}\left(h_i(\bm{s}) \neq \bm{0}\right) \leq 0.5 \\
0, \text{ otherwise}
\end{cases},
\end{equation}
where $h_i(\bm{s})$ represents a portion of the state vector $\bm{s}$, consisting of states of the PUs that share the same channel with the $i$-th SU hop.

In Section \ref{DCTS}, we propose the DCTS scheme, which schedules the SUs by time division multiplexing. Now, we consider a variation of the proposed scheme such that the SUs are scheduled by frequency division multiplexing, denoted as DCTS-FDM. The frequency division setting and parameters are the same as what we state previously. We formulate the optimization problem as follows
\begin{align}
&\text{Maximize:} & &V^i(\Psi^i) = \mathbb{E}\left\{\sum_{t=1}^T g^i(\bm{s}_{i,t}, \delta_t^i) \bigg| \Psi^i \right\} \\
&\text{Subject to:} & & \mathbb{E}\left\{\sum_{t=1}^T g^i_P(\bm{s}_{i,t}, \delta_t^i) \bigg| \Psi^i \right\} \geq \bar{\beta} \cdot \mathbb{E}\left\{\sum_{t=1}^T g^i_P(\bm{s}_{i,t}, 0) \right\}.
\end{align}
The constraint for periodic opportunity scheduling is not needed, because mutual interference for SUs is resolved by frequency division. Although the objective function and the constraint for PU performance have the same form as before, parameters and state transition probabilities are different in the FDM settings. The same set of approximations as the DCTS scheme is applied to solve this optimization problem.

\subsection{Centralized Scheduling}
Figure \ref{fig:th_alpha} plots both PU throughput and total (PU + SU) throughput as a function of the traffic density coefficient $\alpha_2$, with $\alpha_1 = \alpha_2/4$. Here, we fix the PU throughput degradation coefficient $\beta$ to be $0.8$ and provide the curve depicting this constraint (PU constraint). We notice that both the CCTS and the C-TDM schemes satisfy the PU throughput constraint, while the CCTS scheme is able to get closer to the actual constraint of PU throughput. Considering the case with no SU transmission as a baseline, the proposed CCTS scheme provides a $140.9\%$ average improvement in total throughput. In comparison, the C-TDM scheme only provides an average improvement of $59.6\%$.

\begin{figure}
\centering
\includegraphics[width=0.7\linewidth]{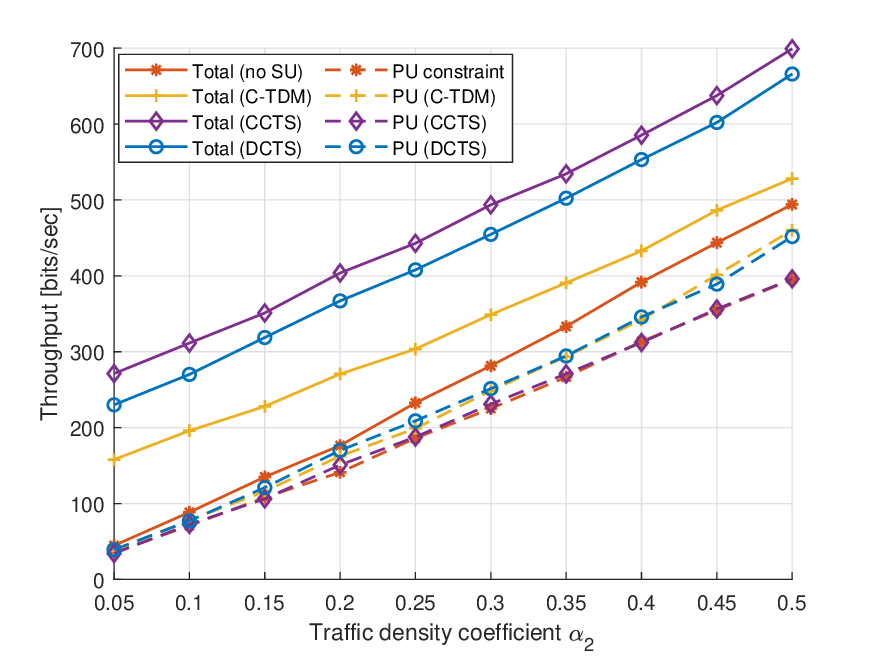}
\caption{Throughput comparison between CCTS, DCTS and C-TDM schemes, with varying traffic density coefficient $\alpha_2$, $\beta = 0.8$.}
\label{fig:th_alpha}
\end{figure}

Figure \ref{fig:th_beta} plots throughput as a function of PU throughput degradation coefficient $\beta$ where the traffic density coefficients are fixed as $\alpha_1 = 0.05$, $\alpha_2 = 0.2$. We observe an average improvement of $142.6\%$ in total throughput by the CCTS scheme, compared with the case with no SU transmission, while the C-TDM scheme provides a $65.8\%$ improvement on average. Additionally, notice that the system throughput of both schemes decreases as $\beta$ increases to larger values. When $\beta = 1$, no throughput improvement is provided by both schemes as no interference is allowed.

\begin{figure}
\centering
\includegraphics[width=0.7\linewidth]{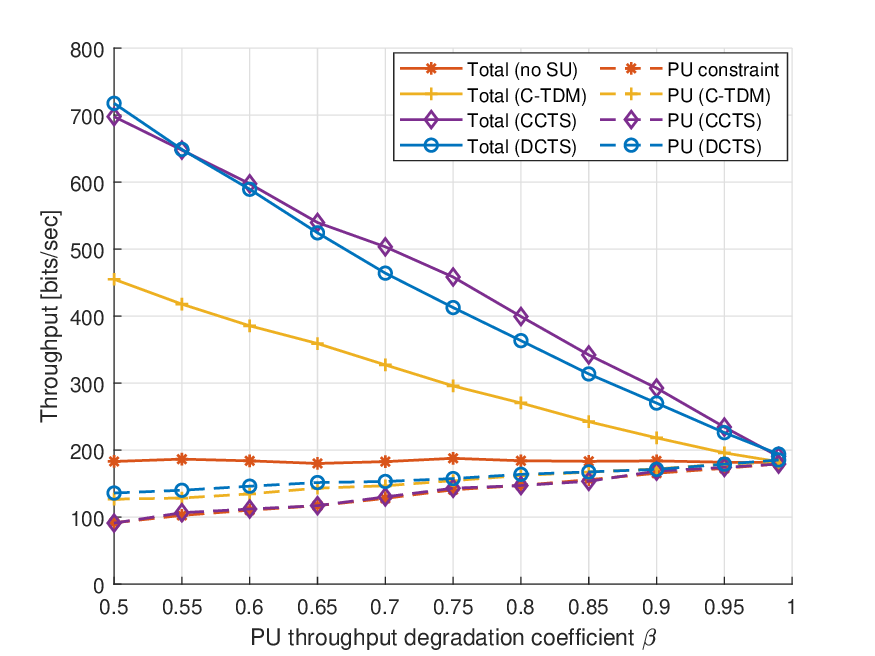}
\caption{Throughput comparison between CCTS, DCTS and C-TDM schemes, with varying degradation coefficient $\beta$, $\alpha_1 = 0.05$, $\alpha_2 = 0.2$.}
\label{fig:th_beta}
\end{figure}

\subsection{Decentralized Scheduling}
In Fig. \ref{fig:th_alpha}, notice that the DCTS scheme satisfies the PU throughput constraint but is farther away from the actual constraint when compared with the CCTS scheme. The reason is that, with the cooperation of all SUs, the central controller has a comprehensive understanding of PU behavior so that decisions can be made without conjecture. The DCTS scheme provides an average improvement of $116.4\%$ in total throughput, compared with the case with no SU transmission, which is much larger than the improvement achieved by the C-TDM scheme.

As depicted in Fig. \ref{fig:th_beta}, apart from satisfying the PU throughput constraint, the DCTS scheme provides $133.8\%$ improvement on average, compared with the case with no SU transmission, in terms of total throughput. The improvement level provided by DCTS and CCTS schemes are close, while much larger than the C-TDM scheme.

Notice that the achieved performance of the centralized scheme, CCTS, is under optimistic assumptions, including the existence of a CC, a separate band for the control signal, and a control overhead of only one time slot.  However, the decentralized scheme still offers performance close to the idealized centralized scheme's upper bound.  DCTS also offers lower computational complexity.

\subsection{Packet Size}
The performance gain provided by optimizing SU packet sizes is evaluated versus traffic density in Fig. \ref{fig:th_alpha_opsize}. Regarding total throughput, the DCTS scheme with optimal packet sizes achieves an average improvement of $14.2\%$, compared with the case of maximum packet sizes ($1500$ bytes).

\begin{figure}
\centering
\includegraphics[width=0.7\linewidth]{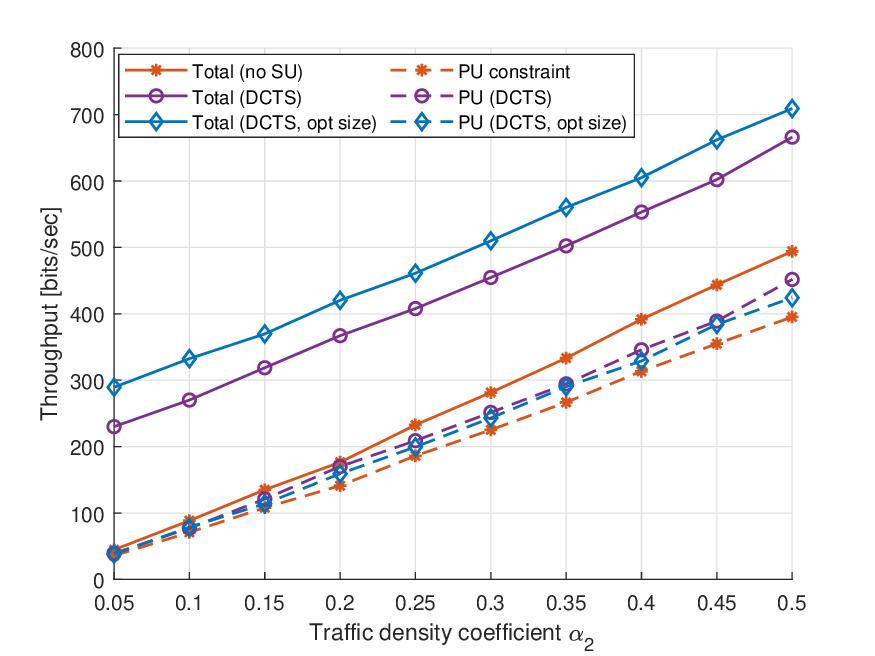}
\caption{Throughput comparison between DCTS scheme with optimized packet size and with normal size, with varying traffic density coefficient $\alpha_2$, $\beta = 0.8$.}
\label{fig:th_alpha_opsize}
\end{figure}

Similarly, we evaluate the performance gain versus PU throughput degradation coefficient in Fig. \ref{fig:th_beta_opsize}. The DCTS scheme with optimal packet sizes provides a $9.9\%$ improvement in total throughput, compared with the case with maximum packet sizes, when the optimal sizes are different from the maximum size. Notice that when degradation coefficient $\beta \leq 0.75$, optimizing SU packet sizes does not provide performance gain. This result is expected since, when large interference is acceptable, the SUs behave more aggressively, by using the largest packet size. On the other hand, when interference is strictly constrained ($\beta \geq 0.8$), using a smaller packet size enables the SUs to avoid interference with part of the PUs. In this way, the SUs trade off packet size for more transmission opportunities, to achieve a higher throughput.

\begin{figure}
\centering
\includegraphics[width=0.7\linewidth]{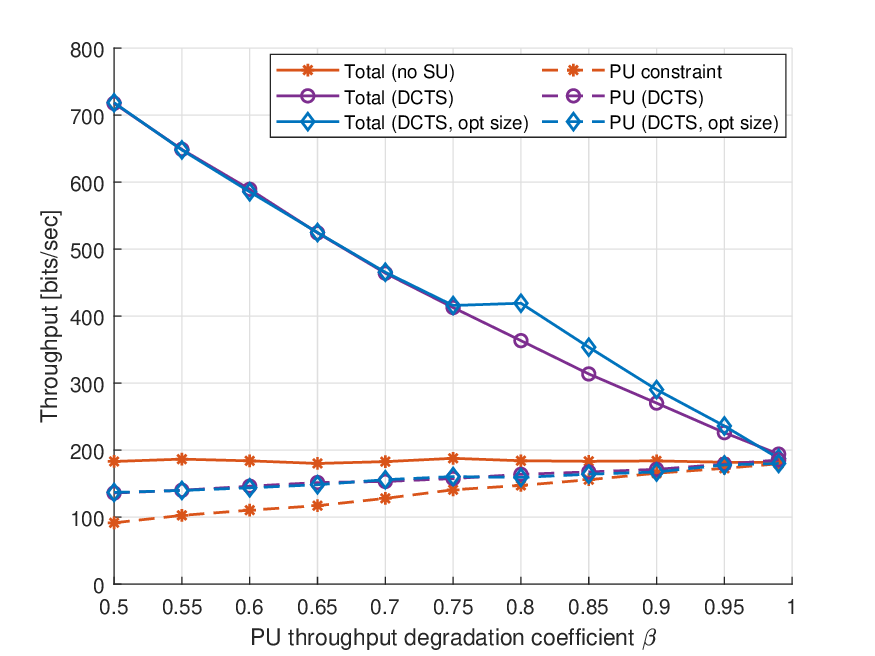}
\caption{Throughput comparison between DCTS scheme with optimized packet size and with normal size, with varying degradation coefficient $\beta$, $\alpha_1 = 0.05$, $\alpha_2 = 0.2$.}
\label{fig:th_beta_opsize}
\end{figure}

\subsection{Interference-free Schedule}
Figure \ref{fig:spec_eff_alpha} plots spectrum efficiency as a function of the traffic density coefficient $\alpha_2$ with $\alpha_1=\alpha_2/4$. The PU throughput degradation coefficient $\beta$ is fixed at $0.8$. We observe that with varying traffic density, the IA scheme is better only when traffic density is low. This result is reasonable as interference-free is easier to achieve for low traffic conditions.

Figure \ref{fig:spec_eff_beta} plots spectrum efficiency as a function of PU throughput degradation coefficient $\beta$ where the traffic density coefficients are fixed as $\alpha_1 = 0.05$ and $\alpha_2 = 0.2$. With varying degradation coefficient, the performance of the IA scheme is better only for large beta values, corresponding to strict PU constraint.

Even though coordination with PUs helps with reducing uncertainty, pursuing interference-free communication limits the performance gain of scheduling. The proposed DCTS scheme can achieve better performance in most system settings. Further, in a practical cognitive system where coordination with PUs is not possible, interference-free approach is not applicable.

\subsection{Time Division versus Frequency Division}
In both Fig. \ref{fig:spec_eff_alpha} and Fig. \ref{fig:spec_eff_beta}, we observe that the proposed DCTS scheme outperforms the DCTS-FDM scheme while the C-TDM scheme is superior to the C-FDM scheme. There are several factors contributing to the worse performance of the FDM type of scheduling schemes. First, the high path attenuation in underwater acoustic channels restricts the interference range. In some regions, constructing sub-channels is not meaningful if only a few users compete for the spectrum. Second, the unique scheduling opportunity provided by the significant propagation delay exists in time domain. Third, because the quality of underwater acoustic channel is highly related to the operating frequency, performance of the network is dominated by the channel with the worst conditions. Lastly, frequency division multiplexing requires guard bands between sub-channels, which lowers the spectral efficiency.

\begin{figure}
\centering
\includegraphics[width=0.7\linewidth]{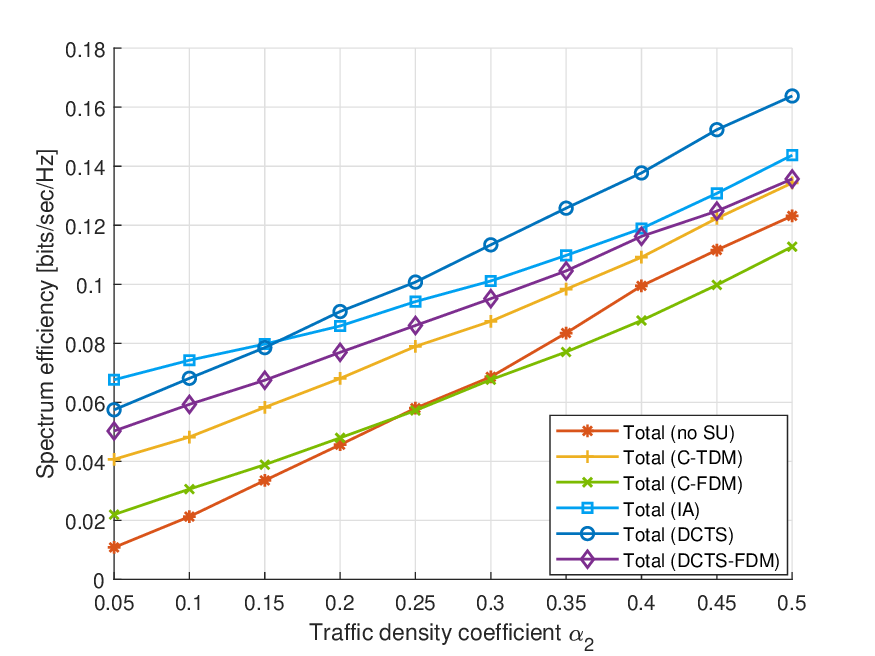}
\caption{Spectrum efficiency comparison between DCTS, DCTS-FDM, C-TDM and C-FDM schemes, with varying traffic density coefficient $\alpha_2$, $\beta = 0.8$.}
\label{fig:spec_eff_alpha}
\end{figure}

\begin{figure}
\centering
\includegraphics[width=0.7\linewidth]{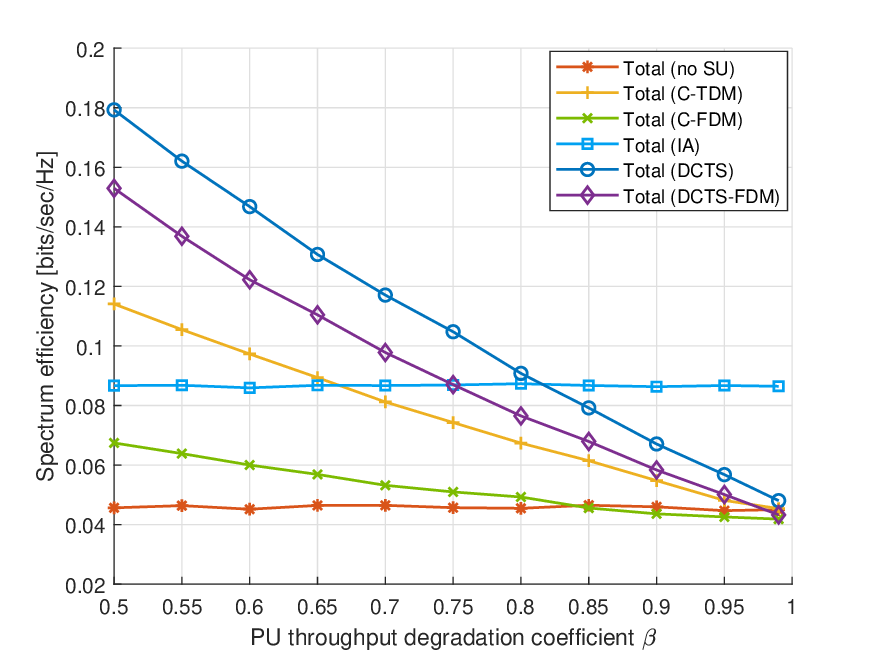}
\caption{Spectrum efficiency comparison between DCTS, DCTS-FDM, C-TDM and C-FDM schemes, with varying degradation coefficient $\beta$, $\alpha_1 = 0.05$, $\alpha_2 = 0.2$.}
\label{fig:spec_eff_beta}
\end{figure}

\subsection{Influence of Other Parameters}
The spectrum efficiency is plotted as a function of the center frequency in Fig. \ref{fig:spec_eff_fc}. Path attenuation in an underwater acoustic channel is highly distance and frequency dependent. Strength of both desired signal and interference are weaker in higher frequency spectrum. As the center frequency changes from $34$ \unit{kHz} to $36$ \unit{kHz}, the DCTS-FDM scheme has better performance, because the change in SINR is dominated by the reduction in interference strength. However, as the center frequency increases, the attenuation of the desired signal dominates, and performance degrades quickly.

\begin{figure}
\centering
\includegraphics[width=0.7\linewidth]{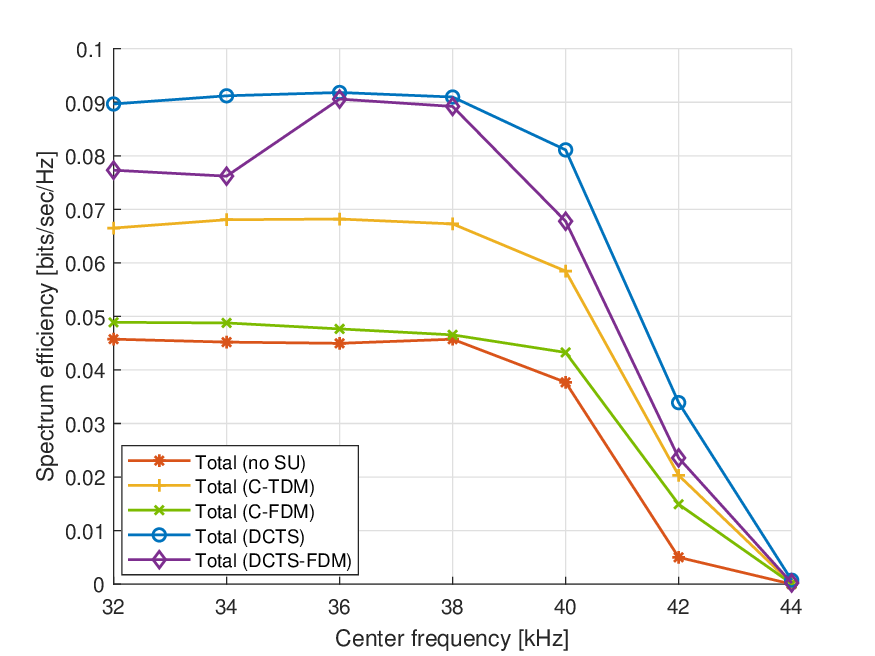}
\caption{Spectrum efficiency comparison between DCTS, DCTS-FDM, C-TDM and C-FDM schemes, with varying center frequency, $\alpha_1 = 0.05$, $\alpha_2 = 0.2$, $\beta = 0.8$.}
\label{fig:spec_eff_fc}
\end{figure}

Figure \ref{fig:spec_eff_dis} plots spectrum efficiency as a function of the end-to-end distance. We observe that for the range ($8$ \unit{km} - $13$ \unit{km}), the proposed schemes works well. However, when the distance becomes larger, reliable communication is not supported and the spectrum efficiency drops to zero rapidly for all schemes.

\begin{figure}
\centering
\includegraphics[width=0.7\linewidth]{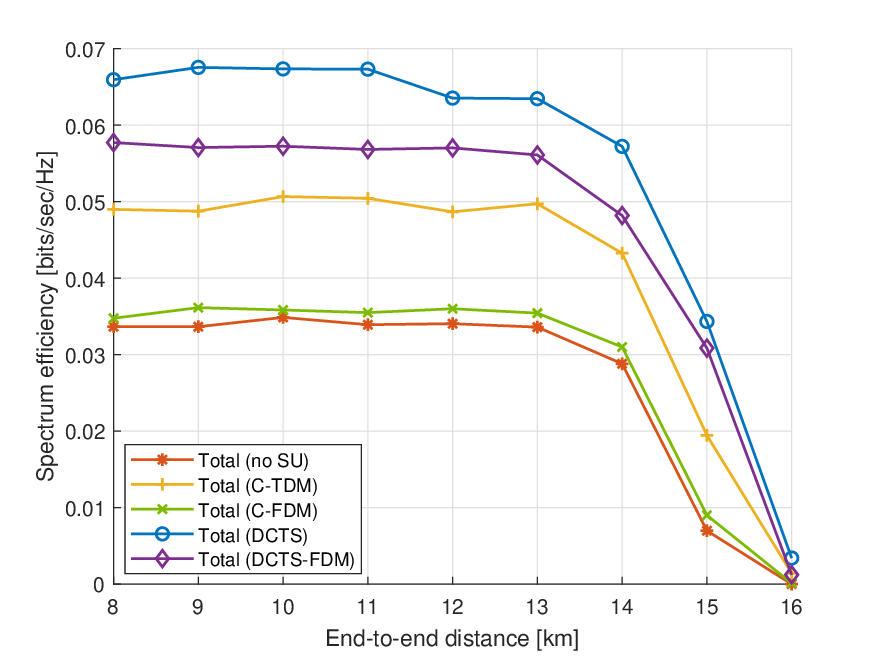}
\caption{Spectrum efficiency comparison between DCTS, DCTS-FDM, C-TDM and C-FDM schemes, with varying end-to-end distance, $\alpha_1 = 0.05$, $\alpha_2 = 0.2$, $\beta = 0.8$.}
\label{fig:spec_eff_dis}
\end{figure}

Lastly, to illustrate that the proposed scheme is widely applicable, we consider a different topology where relays are not in a straight line, with non-uniform distances, as shown in Fig. \ref{fig:topo_nonunif}. The spectrum efficiency is plotted as a function of degradation coefficient in Fig. \ref{fig:spec_eff_beta_topo}, with the traffic coefficients fixed as $\alpha_1 = 0.05$ and $\alpha_2 = 0.2$. We observe that the topology change does not measurably influence the superiority of the proposed scheme over comparison schemes.

\begin{figure}[!t]
\centering
\includegraphics[width=0.4\linewidth]{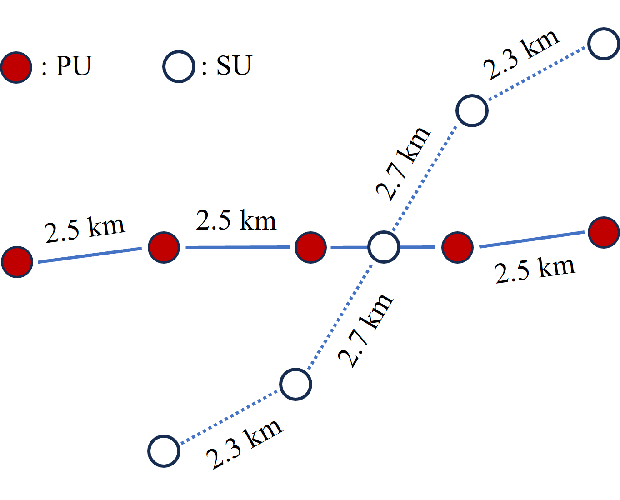}
\caption{A cognitive multi-hop underwater acoustic network with non-uniform relaying distances.}
\label{fig:topo_nonunif}
\end{figure}

\begin{figure}
\centering
\includegraphics[width=0.7\linewidth]{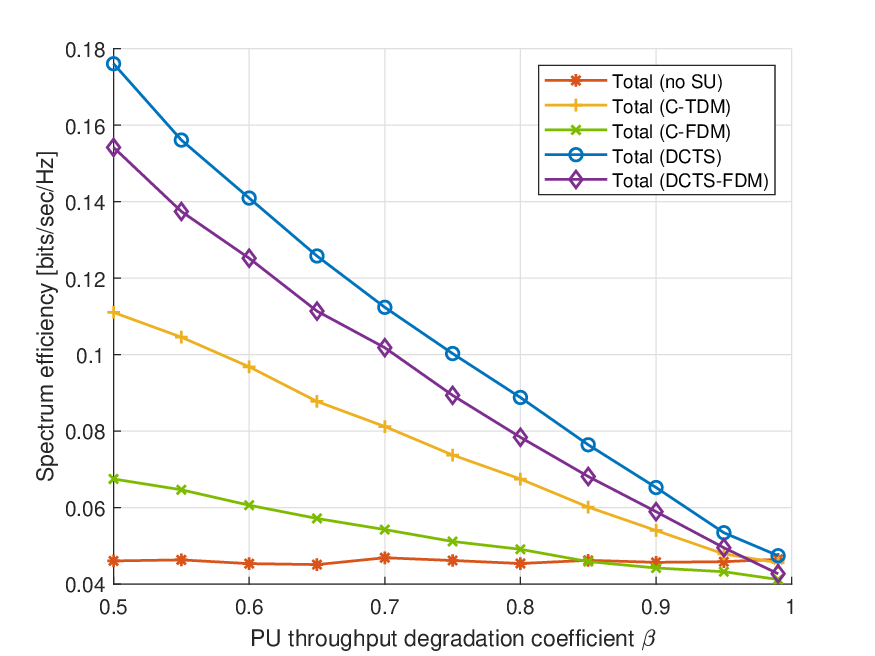}
\caption{Spectrum efficiency as a function of degradation coefficient $\beta$, for a UAN topology with non-uniform relaying distances.}
\label{fig:spec_eff_beta_topo}
\end{figure}

\section{Conclusions} \label{CC}
In this paper, we addressed the scheduling problem of a CM-UAN with an interference constraint, for both centralized and decentralized cases. We first propose a model for system behavior which captures the unique characteristics of UAC via a Markov Chain. The centralized scheduling problem is formulated as a constrained POMDP problem while the decentralized scheduling problem is formulated as a constrained decentralized POMDP problem. The optimal dynamic programming solutions for both centralized and decentralized scheduling are investigated and shown to be computationally expensive. By proving key properties of the objective functions, we propose two approximate schemes, CCTS and DCTS. Both schemes uniquely exploit the strong attenuation in UACs to enable increased sharing of the temporal resource. Furthermore, the impact of packet size on interference is investigated, enabling the optimization of the SU packet sizes in the CM-UAN. The performance of our schemes is evaluated via numerical results, which shows that they are well matched to the underwater channel and provide significant throughput gain with limited loss to the primary users. The results indicate that cognitive time-slot allocation can strongly improve throughput over frequency-slot allocation which is typically considered in radio-frequency based cognitive networks. Under certain traffic conditions, the throughput gain obtained by the proposed DCTS scheme over frequency-slot allocation schemes can be as high as $50\%$.
\bibliographystyle{IEEEtran}
\bibliography{reference.bib}

\end{document}